\documentclass[runningheads]{llncs}
\newif\ifdraft
\drafttrue
\usepackage{graphicx}
\usepackage{underscore}       
\usepackage{tablefootnote}
\usepackage{algorithmic}
\usepackage[utf8]{inputenc}    
\usepackage[T1]{fontenc}       
\usepackage{amssymb}
\usepackage{multirow}
\usepackage{pgf}
\usepackage{tikz}
\usetikzlibrary{arrows,automata}
\usepackage{stmaryrd}
\usepackage{mathtools}
\usepackage{appendix}

\usepackage{hyperref}
\hypersetup{
	colorlinks,
	citecolor=brown,
	final,
	linkcolor=blue,
}
\newcommand\calF{{\cal F}}
\newcommand\calG{{\cal G}}

\newcommand\sloc[1]{l^{\mathrm s}_{#1}}
\newcommand\dloc[1]{l^{\mathrm d}_{#1}}
\usepackage[normalem]{ulem}
\usepackage{thmtools,thm-restate}
\usepackage{enumerate}
\spnewtheorem{notation}{Notation}{\bfseries}{\rmfamily}

\newcommand\nstep[2]{#1^{[#2]}}
\begin{document}
	\title{Polynomial Probabilistic Invariants and the Optional Stopping Theorem}
	\author{Anne Schreuder\inst{1} \and
		C.-H.~Luke Ong\inst{2}}
	%
	\authorrunning{A.~Schreuder, C.-H.~L.~Ong}
	%
	\institute{Rheinische 	Friedrich-Wilhelms-Universität Bonn
		\and
		University of Oxford}
	
	\maketitle              
	\begin{abstract}
		In this paper we present methods for the synthesis of polynomial invariants for 
		probabilistic transition systems. 
		Our approach is based on martingale theory.
		We construct invariants in the form of polynomials over program variables, which give rise to martingales.
		These polynomials are program invariants in the sense that their expected value upon termination is the same as their value at the start of the computation. 
		In order to guarantee this we apply the Optional Stopping Theorem.
		Concretely, we present two approaches. The first is restricted to linear systems. In this case under positive almost sure termination there is a reduction to finding linear invariants for deterministic transition systems.
		Secondly, by exploiting geometric persistence properties we construct martingale invariants for general polynomial transition system.
		We have implemented this approach and it works on our examples.
	\end{abstract}
	%
	%
	%
		
	\section{Introduction}
	
	Probabilistic programs are computer programs that make random choices. 
	Their increasingly widespread use in many areas of computer science --- ranging across machine learning, network protocols and robotics --- has led to a recent surge of interest in the semantics of probabilistic programs and methods for reasoning about them.
	In this paper we present algorithmic solutions to a central verification problem: automatic extraction of program invariants.
	
	Finding invariants of probabilistic programs is hard.
	Given an input there are numerous different ways in which a probabilistic program can behave.
	A useful invariant must 
	take into account, simultaneously, all eventualities and the effects of randomness.
	
	Following \cite{forsyte,g,DBLP:conf/atva/FengZJZX17,DBLP:conf/cav/ChenHWZ15,DBLP:conf/qest/GretzKM13,DBLP:conf/sas/KatoenMMM10,b}, we explore methods that can automatically synthesize expressions over the program variables whose expectation is invariant over the course of the computation. 
	Mathematically the invariants we construct are \emph{martingales}. 
	Elegant and effective, martingales are a popular method in the static analysis of probabilistic programs 
	\cite{a,c,d,DBLP:journals/corr/abs-1904-01117,DBLP:conf/tacas/KuraUH19,g,h,DBLP:journals/corr/abs-1901-06087}.
	
	In this work, we restrict ourselves to invariants of probabilistic programs that still hold upon termination. 
	For non-probabilistic programs this distinction does not exist. 
	However, the run-time of a probabilistic program is generally not a fixed number but a probability distribution.
	Consequently the expectation of an invariant at the point of termination (\emph{qua} stopping time) may differ from its expectation after any fixed number of steps. 
	First observed in betting strategies, this phenomenon has been analysed extensively in probability theory, leading to Doob's celebrated Optional Stopping Theorem. 
	The Theorem is a collection of preconditions for the expectation of a martingale 
	taken at a stopping time to coincide with the expectation taken at the beginning.
	%
	
	The Optional Stopping Theorem (OST) has recently been successfully applied to the static analysis of probabilistic programs \cite{DBLP:journals/corr/abs-1904-01117,g,DBLP:journals/corr/abs-1901-06087,c}.
	However the phenomenon of the ``extra step'' required to check suitable preconditions had already been observed in the analysis of of nested loops.
	When reasoning about such probabilistic programs, it is common to use inductive arguments for the inner loop. 
	For reasons similar to those described earlier, it is necessary to impose appropriate OST-type preconditions on the inner-loop invariants, in order to formulate sound proof rules. 
	These methods have successfully been applied in \cite{wp1,DBLP:conf/atva/FengZJZX17,DBLP:conf/cav/ChenHWZ15,DBLP:conf/qest/GretzKM13}.
	As eloquently argued by Huang et al.~\cite{DBLP:journals/corr/abs-1901-06087}, great care is needed when reasoning about probabilistic programs.

We work with \emph{probabilistic transitions system} (PTS) \cite{a,b,h}, a general model of probabilistic imperative programs. 
Any standard imperative program with random samples and (nested) loops can be expressed as a PTS.

We focus on two preconditions of the OST. 
The first, called (PDB), requires the program to be {\bf p}ositively almost-sure terminating and the martingale expressions to be {\bf d}ifference {\bf b}ounded. 
The other, which we call (IUD), requires the existence of an {\bf i}ntegrable function that {\bf u}niformly {\bf d}ominates the martingale expressions. 

To the best of our knowledge, the (IUD) precondition is new. 
A key advantage of (IUD) is that it does not directly impose any restrictions on the stopping time (run-time), nor does it require the program variables to be bounded.
In the literature, such OST preconditions on the martingales (i.e.~with no recourse to stopping time) are either presented as uniform integrability \cite{c} which is difficult to check, or bounded by a constant \cite{DBLP:journals/corr/abs-1904-01117,DBLP:conf/atva/FengZJZX17,DBLP:conf/cav/ChenHWZ15,DBLP:conf/qest/GretzKM13} which is a special case of the (IUD) precondition.
From each of these preconditions we have derived a corresponding method for synthesizing invariants.

As a first contribution, we show that for linear programs that are positively almost-sure terminating, the search for invariants can be reduced to the case of deterministic (non-probabilistic) programs. 
Given a linear probabilistic program, we construct its determinised counterpart by taking the expectations of all the random samples. 
The price to pay for this reduction is that the invariants for the corresponding deterministic program need to be linear and difference bounded so that they can be carried over to the probabilistic case.
The advantage is that any linear invariant (expressible as a martingale) satisfying the (PDB) precondition can be found in this manner.
We are confident that this reduction will prove to be fruitful in future research as both the generation of invariants of deterministic programs, and reasoning about positive almost sure termination of linear probabilistic programs \cite{e,DBLP:journals/corr/abs-1901-06087,DBLP:conf/tacas/KuraUH19,d,a,c} have been studied in detail.


Our second contribution is a method to automatically synthesize martingale expressions of polynomial probabilistic transition system, via our new (IUD) precondition of the OST.
This is an extension of the approach by Barthe et al.~\cite{g}. 
Based on a ``seed'' polynomial $P$ they apply Doob's Decomposition Theorem in order to derive a martingale. 
They then ascertain whether this martingale satisfies the (PDB) precondition of the OST.

Our main improvement of this method exploits persistence properties of PTSs in the sense of Chakarov et al.~\cite{h}. 
We show that every PTS $\Pi$ gives rise to a ``persistence'' decision problem, which asks if there exists a sum-of-squares (SOS) polynomial $V$ of the program variables $x$ such that the expected value of $V$ at the next computation step of $\Pi$ is always no greater than some fixed fraction of the value of $V$, for all values of $x$.
Crucially, solutions to this decision problem are witnesses of the (IUD) precondition of the OST (in the sense that each summand $P$ of every SOS polynomial solution $V$ to the persistence decision problem determines an integrable function which uniformly dominates the martingale derived from $P$ via Doob's Decomposition).
We further reduce the persistence decision problem to a (standard) SOS Optimisation Problem, so that if the latter (feasibility) problem is solvable then the (IUD) precondition holds, thus allowing us to infer via OST that the martingale derived from $P$ is an invariant of the desired kind. 
The time complexity of this synthesis algorithm is polynomial in the size of the input system $\Pi$, when restricted to solutions $V$ of bounded degrees.

We have implemented this method which works on the examples in this paper, 
non-linear invariants, and non-linear systems.


\paragraph{Outline.}

In Sec.~\ref{sec: background}, we review standard background in martingale theory and fix notations.
In Sec.~\ref{sec: program}, we introduce probabilistic transition systems -- our model of probabilistic imperative programs, and invariants as martingales.
In Sec.~\ref{sec: linear}, we show how the (PDB) precondition of the OST can be exploited to reduce invariants of linear programs to those of their determinised counterparts.
In Sec.~\ref{sec: geometric persistence}, we present our algorithm that successively transforms a PTS to an instance of the SOS Optimisation Problem whose solvability implies satisfaction of the (IUD) precondition.
In Sec.~\ref{sec:examples} we briefly discuss our tool implementation and two examples.
Finally we discuss related work in Sec.~\ref{sec: relatedwork}.

Proofs missing from the paper are presented in \cite[Appendix]{SchreuderO19}.
	
	
	\section{Martingales, optional stopping, and SOS optimisation} \label{sec: background}
	
	We briefly review standard notions in probability theory (see e.g.~\cite{grimmett,Williams91,durrett}) and fix notations along the way. 
	
	Recall that a function $X: \Omega \rightarrow \mathbb{R}$ defined on a probability space $(\Omega, \calF, \mathbb{P})$ is a {\em random variable} if $X$ is $\mathbb{B}(\mathbb{R})$-$\calF$-measurable, i.e.~for all $A \in \mathbb{B}(\mathbb{R})$ -- the $\sigma$-algebra generated by the open subsets of $X$, we have $X^{-1}(A) \in \calF$.	
	The $k$-$th$ \emph{moment} of $X$ is the integral $\int_{\Omega} X^k \ d\mathbb{P}$,
	which exists if $\int_{\Omega} |X|^k \ d\mathbb{P}$ is finite.
	%
	
	Henceforth we fix a probability space $(\Omega, \calF, \mathbb{P})$. 
	Filtration is a concept introduced to represent information known over time. 
	Formally, a \emph{filtration} is a finite or infinite sequence $\{\calF_n\}_{n \in \mathbb{N}}$ of $\sigma$-algebras over $\Omega$ s.t.~$\forall n \in \mathbb{N} \, . \, \calF_n \subseteq \calF_{n+1} \subseteq \calF$. 
	(To save writing, we will often elide the subscript ``$n \in \mathbb{N}$'' and write the sequence as $\{\calF_n\}$.)
	The $\sigma$-algebra $\calF_n$ represent the information that is known at time $n$. 
	A sequence of random variables $\{X_n\}$ is \emph{adapted} to $\{\calF_n\}$ just if $X_n$ is $\mathbb{B}(\mathbb{R})$-$\calF_n$-measurable for all $n$.
	A random variable $T: \Omega \rightarrow (\mathbb{N} \cup \{\infty\})$ is called a \emph{stopping time} with respect to a filtration $\{\calF_n\}$ if 
	$\{T \leq n\}\in \calF_n$ for all $n \in \mathbb{N}$.
	%
	
	Let $X$ be an integrable random variable (i.e.~$\int |X| d \mathbb{P} < \infty$), and $\calG$ be a $\sigma$-subalgebra of $\calF$. 
	The \emph{conditional expectation} $\mathbb{E}[X \mid \calG]$ is a $\calG$-measurable function s.t.~for all $A \in \calG$,
	\(
	\int_{A} X \ d \mathbb{P} = \int_{A} \mathbb{E}[X \mid \calG] \ d \mathbb{P}
	\).
	The conditional expectation is $\mathbb{P}$-a.s.~unique, linear and monotone in $X$. 
	
	Given a filtration $\{\calF_n\}$, an integrable, $\{\calF_n\}$-adapted sequence of random variables $\{X_n\}$ is a
	\begin{enumerate}[(i)]
		\item  {\em Martingale} if for all $n \in \mathbb{N}$,
		$\mathbb{E}[X_{n+1} \mid \calF_n] = X_n$ a.s.
		\item 	{\em Supermartingale} if for all $n \in \mathbb{N}$,
		$\mathbb{E}[X_{n+1} \mid \calF_n] \leq X_n$ a.s.
		\item 	{\em Submartingale}	if for all $n \in \mathbb{N}$,
		$\mathbb{E}[X_{n+1} \mid \calF_n] \geq X_n$ a.s.
	\end{enumerate}
	
	
	\begin{theorem}[Doob's Decomposition]
		{\rm \cite[Theorem 10.1]{klenke}}
		\label{thm: Doob's decomposition}
		Let $(\Omega, \calF, \mathbb{P})$ be a probability space. 
		If $E = \{E_n\}_{n \in \mathbb{N}}$ is a sequence of integrable random variables adapted to a filtration $\{\calF_n\}_{n \in \mathbb{N}}$ where each $E_n$ has a finite expected value, then $M = \{M_n\}_{n \in \mathbb{N}}$ is a martingale where
		\[
		M_n := 
		\left\{
		\begin{array}{ll} 
		E_0 & \hbox{if $ n = 0$}\\
		E_0 + \sum_{i = 1}^{n} (E_i - \mathbb{E}[E_i \mid \calF_{i-1}]) & \hbox{otherwise}
		\end{array} \right.
		\]
		If $E$ is already a martingale, then $M = E$.
	\end{theorem}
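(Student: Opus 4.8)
The plan is to verify directly the three defining properties of a martingale for $\{M_n\}$: adaptedness, integrability, and the martingale identity $\mathbb{E}[M_{n+1} \mid \calF_n] = M_n$ a.s. Adaptedness and integrability are the routine part. Since each $E_i$ is $\calF_i$-measurable and each correction term $\mathbb{E}[E_i \mid \calF_{i-1}]$ is $\calF_{i-1}$-measurable, both are $\calF_n$-measurable whenever $i \leq n$; as $M_n$ is a finite $\calF_n$-measurable combination of these, it is itself $\calF_n$-measurable. For integrability I would recall that each conditional expectation is integrable with $\int |\mathbb{E}[E_i \mid \calF_{i-1}]| \, d\mathbb{P} \leq \int |E_i| \, d\mathbb{P} < \infty$, so $M_n$, being a finite sum of integrable random variables, is integrable.

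The heart of the argument is the martingale identity. First I would observe that the increments simplify: for every $n \geq 0$,
\[
M_{n+1} - M_n = E_{n+1} - \mathbb{E}[E_{n+1} \mid \calF_n].
\]
Taking conditional expectation with respect to $\calF_n$ and using linearity gives
\[
\mathbb{E}[M_{n+1} - M_n \mid \calF_n] = \mathbb{E}[E_{n+1} \mid \calF_n] - \mathbb{E}\big[\mathbb{E}[E_{n+1} \mid \calF_n] \mid \calF_n\big].
\]
The crux is then the idempotence of conditional expectation: since $\mathbb{E}[E_{n+1} \mid \calF_n]$ is already $\calF_n$-measurable, conditioning it again on $\calF_n$ leaves it unchanged, so the right-hand side vanishes. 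Because $M_n$ is $\calF_n$-measurable, it follows that $\mathbb{E}[M_{n+1} \mid \calF_n] = M_n$ a.s., as required.

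For the final claim, if $E$ is already a martingale then $\mathbb{E}[E_i \mid \calF_{i-1}] = E_{i-1}$ a.s.\ for every $i$, so each summand becomes $E_i - E_{i-1}$ and the sum telescopes to $M_n = E_0 + \sum_{i=1}^n (E_i - E_{i-1}) = E_n$. I do not expect a genuine obstacle here; the only point requiring care is the integrability of the correction terms, which must be secured before the cancellation argument can be invoked legitimately, since the conditional expectation is only defined for integrable random variables. Everything else follows from the linearity, monotonicity, and idempotence properties of conditional expectation recorded above.
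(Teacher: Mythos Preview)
Your proof is correct and is the standard verification of Doob's decomposition: adaptedness and integrability of $M_n$ follow from those of the $E_i$ and of their conditional expectations, and the martingale identity drops out immediately from the idempotence of conditional expectation applied to the increment $M_{n+1}-M_n = E_{n+1}-\mathbb{E}[E_{n+1}\mid\calF_n]$. The telescoping argument for the case where $E$ is already a martingale is also fine.

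There is nothing to compare against, however: the paper does not supply its own proof of this theorem. It is quoted as background from \cite[Theorem~10.1]{klenke} and used later (in Section~\ref{sec: geometric persistence}) to construct the martingale~\eqref{eq:Doob} from a seed polynomial $P$. Your argument is exactly the textbook one that Klenke gives, so there is no divergence in approach to discuss.
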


	\begin{example}
		Let $\{X_n\}$ be i.i.d.~random variables with $\mathbb{P}(X_n=1) = \mathbb{P}(X_n= -1) = 1/2$. 
		Then $M_n := \sum_{i=1}^n X_i$ is the position of a one-dimensional symmetric random walk, starting from the origin,
		and $\{M_n\}$ is a martingale adapted to $\{\sigma(X_1, \cdots, X_n)\}$ with $\mathbb{E}(M_n) = 0$ for all $n$.
		Now $T := \min \{n \mid M_n = 1\}$ is a stopping time; clearly $M_T = 1$. 
		Moreover, it is well-known that $\mathbb{P}(T < \infty) = 1$.
		Yet we have $\mathbb{E}(M_T) = 1 \not= 0 = \mathbb{E}(M_0)$. 
	\end{example}
	
	The issue is that $T$ is too large: $\mathbb{E}(T) = \infty$.
	It turns out that if we impose suitable boundedness conditions, then we will have $\mathbb{E}(M_T) = \mathbb{E}(M_0)$: 
	this is the classical Optional Stopping Theorem due to Doob.
	
	\begin{theorem}[Optional Stopping] 
		\label{thm: OST}
		Let $(\Omega, \calF, \mathbb{P})$ be a probability space, and $M = \{M_n\}_{n \in \mathbb{N}}$ be a (super)martingale adapted to a filtration $\{\calF_n\}_{n \in \mathbb{N}}$, and $T$ a stopping time.
		If any of the following preconditions holds
		\begin{description}
			\item[(PDB)] \emph{Positive almost-sure termination and Difference Bounded:} $\mathbb{E}(T) < \infty$ and $\exists L > 0 \, . \, \forall n \in \mathbb{N} \, . \, \mathbb{E}[| M_{n+1} - M_n | \mid \calF_n] \leq L$. \emph{\cite[Thm.~10.10]{Williams91}}
			\item[(IUD)] \emph{Integrable Uniformly Dominating Function:} there exists an integrable function $g : \Omega \rightarrow \mathbb{R}$ satisfying $\forall n \in \mathbb{N} \, . \, |M_n| \leq g$ almost surely.
		\end{description}
		then $M_{T}$ is integrable. Furthermore $\mathbb{E}(M_{T}) = \mathbb{E}(M_0)$ if $M$ is a martingale; and $\mathbb{E}(M_{T}) \leq \mathbb{E}(M_0)$ if $M$ is a supermartingale.
	\end{theorem}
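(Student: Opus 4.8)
The plan is to analyse the stopped process $M^{T} = \{M_{T \wedge n}\}_{n \in \mathbb{N}}$, where $T \wedge n := \min(T,n)$, and to reduce the claim to a single passage-to-the-limit argument that each precondition is tailored to license. First I would show that $M^{T}$ is itself a (super)martingale adapted to $\{\calF_n\}$. From the telescoping identity
\[
M_{T \wedge (n+1)} - M_{T \wedge n} = \mathbf{1}_{\{T > n\}}\,(M_{n+1} - M_n),
\]
and the observation that $\{T > n\} = \{T \leq n\}^{c} \in \calF_n$ since $T$ is a stopping time, conditioning on $\calF_n$ gives
\[
\mathbb{E}[M_{T \wedge (n+1)} - M_{T \wedge n} \mid \calF_n] = \mathbf{1}_{\{T > n\}}\,\mathbb{E}[M_{n+1} - M_n \mid \calF_n],
\]
which vanishes in the martingale case and is $\leq 0$ (the indicator being non-negative) in the supermartingale case. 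Taking expectations and summing over $n$ then yields $\mathbb{E}[M_{T \wedge n}] = \mathbb{E}[M_0]$ for all $n$ when $M$ is a martingale, and $\mathbb{E}[M_{T \wedge n}] \leq \mathbb{E}[M_0]$ when $M$ is a supermartingale.

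It remains to justify $\mathbb{E}[M_{T \wedge n}] \to \mathbb{E}[M_T]$ together with integrability of $M_T$; this is where the two preconditions part ways. Under (PDB), $\mathbb{E}(T) < \infty$ forces $T < \infty$ a.s., so $M_{T \wedge n} \to M_T$ pointwise a.s. For a dominating function I would take $Y := |M_0| + \sum_{k=1}^{T} |M_k - M_{k-1}|$, which bounds every $|M_{T \wedge n}|$. Integrability of $Y$ is the crux: rewriting the sum as $\sum_{k \geq 1} \mathbf{1}_{\{T \geq k\}}\,|M_k - M_{k-1}|$, using $\{T \geq k\} \in \calF_{k-1}$ and the tower property bounds each term by $L\,\mathbb{P}(T \geq k)$, and summing via $\mathbb{E}(T) = \sum_{k \geq 1} \mathbb{P}(T \geq k)$ gives $\mathbb{E}[Y] \leq \mathbb{E}|M_0| + L\,\mathbb{E}(T) < \infty$. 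The Dominated Convergence Theorem then delivers both the integrability of $M_T$ and $\mathbb{E}[M_{T \wedge n}] \to \mathbb{E}[M_T]$.

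Under (IUD), I would first strengthen the hypothesis to $\sup_n |M_n| \leq g$ a.s.\ simultaneously, by discarding the (null) union over $n$ of the exceptional sets. As $g$ is integrable, $\{M_n\}$ is uniformly integrable, so the Martingale Convergence Theorem furnishes an integrable a.s.\ limit $M_\infty$; on $\{T = \infty\}$ I read $M_T := M_\infty$, so that $M_{T \wedge n} \to M_T$ a.s.\ in every case. Since $|M_{T \wedge n}| \leq g$ a.s., Dominated Convergence once more yields integrability of $M_T$ and convergence of the expectations. In either precondition, combining this limit with the stopped-process identity above closes the argument.

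The step I expect to be the main obstacle is the integrability estimate for $Y$ under (PDB): the interchange of expectation with the random-length sum must be justified (by Tonelli on the non-negative summands), and the predictability $\{T \geq k\} \in \calF_{k-1}$ must be invoked precisely so that the difference bound $L$ can be pulled out under the conditional expectation. The (IUD) branch is conceptually smoother but conceals a subtlety when $\mathbb{P}(T = \infty) > 0$, where the very well-definedness of $M_T$ rests on the Martingale Convergence Theorem rather than on any property of the stopping time.
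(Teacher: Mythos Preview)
Your argument is correct. For (PDB) the paper simply cites \cite[Thm.~10.10]{Williams91}, and your dominated-convergence argument with $Y = |M_0| + \sum_{k=1}^{T}|M_k - M_{k-1}|$ is precisely the classical one found there, so on that branch there is no real difference.

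For (IUD) the two proofs diverge in how they pass to the limit. The paper argues: $\{g\}$ is uniformly integrable, hence so is $\{M_{T\wedge n}\}$ (Klenke, Thm.~6.18(iii)); then the $L^1$ martingale convergence theorem (Klenke, Thm.~11.7) gives an $L^1$ limit, identified with $M_T$, from which $\mathbb{E}(M_{T\wedge n}) \to \mathbb{E}(M_T)$ follows. You instead invoke the a.s.\ (super)martingale convergence theorem (available because $\sup_n \mathbb{E}|M_n| \le \mathbb{E} g < \infty$) to produce $M_\infty$, set $M_T := M_\infty$ on $\{T=\infty\}$, and then apply Dominated Convergence with $g$ directly. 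Your route is more elementary---it avoids the uniform-integrability machinery altogether---and it is more explicit about what $M_T$ means on $\{T=\infty\}$, a point the paper leaves implicit. The paper's route, on the other hand, packages the argument so that the $L^1$ convergence (and hence convergence of expectations) comes for free once uniform integrability is checked, which scales better if one wanted to replace the single majorant $g$ by a more general uniform-integrability hypothesis.
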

	
	\begin{proof}
		(IUD): The stopped-at-$T$ process, $\{M_{n \wedge T}\}_{n \in \mathbb{N}}$, is a (super)martingale. 
		Since $g$ is integrable, the family $\{g\} \subseteq L^1$ is uniformly integrable.
		It follows (from \cite[Theorem 6.18(iii)]{klenke}) that $\{M_{n \wedge T}\}_{n \in \mathbb{N}}$ is a uniformly integrable (super)martingale.
		By \cite[Theorem 11.7]{klenke} there is an integrable random variable $M_\infty$ with $\lim_{n \to \infty} M_{n \wedge T} = M_\infty$ a.s.~and in $L^1$. 
		Since $M_{n \wedge T}$ converges 
		to $M_{T}$ in distribution, we have $M_T = M_\infty$. 
		It then follows from convergence in $L^1$ that
		\(
		\lim_{n \rightarrow \infty} \mathbb{E}(M_{n \wedge T}) = \mathbb{E}(M_{T}).
		\)	
		Thus, by \cite[Theorem~p.~99]{Williams91},
		in case $\{M_{n \wedge T}\}_{n \in \mathbb{N}}$ is a martingale: for all $n \in \mathbb{N}$, 
		$\mathbb{E}(M_{0}) = \mathbb{E}(M_{n \wedge T})$,
		and so $\mathbb{E}(M_{0}) = \mathbb{E}(M_{T})$; in case $\{M_{n \wedge T}\}_{n \in \mathbb{N}}$ is a supermartingale: for all $n \in \mathbb{N}$, 
		$\mathbb{E}(M_{0}) \leq \mathbb{E}(M_{n \wedge T})$.
		Limits preserve inequalities, so
		\(
		\mathbb{E}(M_{0})
		= \lim_{n \rightarrow \infty} \mathbb{E}(M_{0})
		\leq
		\lim_{n \rightarrow \infty} \mathbb{E}(M_{n \wedge T}) = \mathbb{E}(M_{T}).
		\)	
		\qed
	\end{proof}

	There are numerous versions of the Optional Stopping Theorem (OST) in the literature, which differ in the precondition that is imposed (e.g.~\cite[\S 10.10]{Williams91}).
	Many, such as (PDB), require the stopping time to be bounded in some sense.
	One of very few exceptions is our precondition (IUD) which is new (to the best of our knowledge).
	
	In this paper, we apply Theorem~\ref{thm: OST} to reason about invariants synthesized from probabilistic programs.
	In our application, the stopping time $T$ will denote the run-time of a program. 
	Thus the formula $\mathbb{E}(T) < \infty$ in precondition (PDB) means that the expected run-time is finite.
	The random variable $M_n$ in Theorem~\ref{thm: OST} will be an expression built up from the program variables and locations in the $n$-th step of the computation. 
	Thus, the precondition (PDB) states, additionally, that the step-wise change of the value of the expresions $\{M_n\}$ is bounded;
	the precondition (IUD) requires that the value of $M_n$ is bounded above by an integrable function, uniformly (i.e.~for all computation steps $n$).
	
	\paragraph{SOS optimisation.}
	\label{subsec: sos opt}	
	
	Recall that a polynomial $p(x) \in \mathbb{R}[x]$ is a \emph{sum of squares} {\em (SOS)} if there exist $q_1, q_2, \cdots, q_m \in \mathbb{R}[x]$ such that 
	\(
	p(x) = \sum_{i = 1}^m q_i(x)^2.
	\)
	The {Sum-of-Squares (SOS) Optimisation Problem} has a linear cost function with a particular kind of constraint on the decision variables, namely, when the decision variables are used as coefficients in certain given polynomials, the resultant polynomials must be SOS. 
	Formally, the {\em SOS Optimisation Problem} asks: given $b \in \mathbb{R}^m$ and $a_{i,j}, c_i \in \mathbb{R}[x]$ for $1 \leq i \leq k$ and $1 \leq j \leq m$, compute
	\[
	\max_{y \in \mathbb{R}^m} \; \big(b_1 \cdot y_1 + b_2 \cdot y_2 + \cdots + b_m \cdot y_m\big)
	\]
	subject to $c_i(x) + a_{i,1}(x) \cdot y_1 +  \cdots + a_{i,m}(x) \cdot y_m$ is SOS (in $\mathbb{R}[x]$) for all $1 \leq i \leq k$.
	
	
	
	Any polynomial of degree at most $2d$ can be expressed in \emph{Gram matrix form},
	\(
	p(x) = z(x)^{T} \, Q \; z(x),
	\)
	where the (column) vector $z$ contains all monomials of degree $d$.
	In particular, a polynomial $p$ is SOS if and only if there exists a symmetric and positive-semidefinite matrix $Q$ such that 
	\(
	p(x) = z(x)^{T} \, Q \; z(x)
	\). 
	As a result the 
	SOS Optimisation Problem is equivalent to Semidefinite Programming (SDP) Problem \cite[page 74]{semidefiniteopt};
	and solutions to the SDP constraint systems can be found in polynomial time \cite[Sec.~2.3]{semidefiniteopt}.
	
	
	\section{Probabilistic transition systems and invariants} \label{sec: program}
	
	We consider probabilistic transition systems as models of imperative programs. 
	We distinguish program variables 
	${X} = \{x_1, x_2, \cdots, x_n\}$
	which are assigned deterministically, and random variables ${R} = \{r_1, r_2, \cdots, r_m\}$ which are sampled from probability distributions. 
	Both take values in $\mathbb{R}$.
	We assume that the random variables are
	stochastically independent of each other;
	moreover their distribution are i.i.d.~over time, and all moments exist.
	
	\begin{definition}[Probabilistic Transition Systems - PTS]\rm \cite{a,b,h}
		\label{def: pts}
		A \emph{probabilistic transition system} 
		is a tuple
		$\Pi = \langle W, X, R, X_0, L, l_0, l_F, T \rangle$ where
		\begin{enumerate}[(i)]
			\item
			$W = (\Omega, \calF, \mathbb{P})$ is a probability space.
			\item
			$X = \{x_1, x_2, \cdots, x_n\}$ are the program variables 
			and 
			$R = \{r_1, r_2, \cdots, r_m\}$ are the random variables defined on the probability space $W$.
			\item
			$X_0$ is the initial distribution of the program variables, where all moments exist.
			\item $L$ is a finite set of (program) \emph{locations};
			$l_0 \in L$ is the initial location and $l_F \in L$ is the final location.
			\item 
			$T = \{\tau_1, \tau_2, \cdots, \tau_p\}$ is a finite set of transitions.
			Each $\tau \in T$ is a quadruple $\langle \sloc{\tau}, \phi_{\tau}, f_{\tau}, \dloc{\tau} \rangle$ where $\sloc{\tau}$ is the source location and $\dloc{\tau}$ the destination location; $\phi_{\tau}$ is the {\em guard} \textit{assertion}, which is a boolean formula over polynomial inequalities over $X$; 
			and $f_{\tau}$, the {\em update function}, takes $(x, r) \in \mathbb{R}^{n} \times \mathbb{R}^{m}$ and returns (a discrete distribution 
			given by) the probability mass function
			\(
			\{F_{\tau, j}(x,r) \mapsto p_{\tau, j} \mid j \in \{1, 2, \cdots, j_\tau\}\}
			\)
			where 
			$F_{\tau,j} : \mathbb{R}^{n} \times \mathbb{R}^{m} \rightarrow \mathbb{R}^{n}$ are polynomials.
		\end{enumerate}
	\end{definition}

	A \emph{configuration} of a PTS $\Pi$ is a pair $(l, a)$ where $l \in L$, and $a \in \mathbb{R}^n$ represents the contents of the program variables.
	We say that a transition $\tau = \langle \sloc{\tau}, \phi_{\tau}, f_{\tau}, \dloc{\tau} \rangle$ is \textit{enabled} at a configuration $(l, a)$ if $l = \sloc{\tau}$ and $a \models \phi_{\tau}$.
	Let $l \in L$, we write $T_l := \{ \tau \in T \mid \sloc{\tau} = l\}$ for the set of transitions with $l$ as the source location.
	We assume that PTSs are \emph{non-demonic}:
	for all $l \in L$,
	$\bigwedge_{\tau \not= \tau' \in T_l}(\phi_{\tau} \wedge \phi_{\tau'}) = \mathit{false}$ and
	$\bigvee_{\tau \in T_l} \phi_\tau = \mathit{true}$.
	This amounts to \emph{determinacy} in the sense that for all $l$ and $a \in \mathbb{R}^n$, there is a unique transition that is enabled at $(l, a)$.
	
	\paragraph{PTS Computation.} The computation of a PTS $\Pi$ proceeds as follows.
	The initial configuration is $(l_0, a_0)$ where $a_0$ is sampled from $X_0$.
	Suppose $\Pi$ is at a configuration $(l, a)$ at step $n$. 
	Let $\tau$ be the unique transition that is enabled (i.e.~$l = \sloc{\tau}$ and $a \models \phi_\tau$), and $b \in \mathbb{R}^m$ be a fresh sample drawn from the respective distributions.
	Then, with probability $p_{\tau, i}$, the configuration at step $n+1$ is $(\dloc{\tau}, F_{\tau, i}(a, b))$.

	%
	\begin{example} \label{ex: bsp}
		Consider the while loop	
		

		\medskip
		
		\noindent\begin{minipage}{0.5\textwidth}
			\begin{algorithmic}
				\WHILE {$x_1 \leq 10$} 
				\STATE {$r_1 \sim \textrm{Normal}(\mu_1, \sigma_1^2)$}
				\STATE {$r_2 \sim \textrm{Normal}(\mu_2, \sigma_2^2)$}
				\STATE {$x_1 := x_1 + r_1*x_1*x_2$}
				\STATE {$x_2 := x_2 + r_2*x_1*x_2$}
				\ENDWHILE
			\end{algorithmic}
		\end{minipage}%
		\begin{minipage}{0.5\textwidth}	
			where $x_1, x_2$ are program variables and $r_1, r_2$ are random variables. 
			Both $r_1$ and $r_2$ are normally distributed with mean $\mu_1$ and $\mu_2$, and variance $\sigma_1^2$ and $\sigma_2^2$, respectively.
		\end{minipage}	
		
		\medskip
		
		\noindent The while loop can be modelled by a PTS with transitions 
		$\tau_1 = \langle l_0, \phi_1, f, l_0 \rangle$ and
		$\tau_2 = \langle l_0, \phi_2, f, l_F \rangle$, 
		locations $\{l_0, l_F\}$,
		guards $\phi_1 = [x_1 \leq 10]$ and $\phi_2 = [x_1 > 10]$, and an update function $f : \mathbb{R}^2 \times \mathbb{R}^2 \to \mathbb{R}^2$ defined by
		\footnote{
			As we are only interested in the behaviour within the while loop, this is the simplest representation as a PTS.
		}
		\[
		f(x_1, x_2, r_1, r_2) = 
		(x_1 + r_1*x_1*x_2, \;
		x_1 + r_2*x_1*x_2)
		\]
		
	\end{example}
	
	
	%
	
	The operational semantics of a PTS $\Pi$ is given by
	random variables $\{ X^n, L^n \}$ on $W$ representing the distributions of the (contents of the) program variables and the location at each computation step $n$.
	
	\begin{definition}\rm 
		\label{def: operational semantics PTS}
		The {\em operational semantics} of a PTS $\Pi$ is a sequence of random variables $\{X^k, L^k\}_{k \in \mathbb{N}}$ where $X^k : \Omega \rightarrow \mathbb{R}^n$, $L^k: \Omega \rightarrow L$ satisfy
		\begin{enumerate}[(i)]
			\item
			$X^0$ 
			is the random variable $X_0$ as defined in Definition \ref{def: pts}
			\item  
			$L^0(\omega) = l_0$ is the initial location for all $\omega \in \Omega$
			\item 
			Let $\omega \in \Omega$, suppose
			$\tau$ is the unique transition enabled at the configuration $(X^k(\omega), L^k(\omega))$ then
			$L^{k+1}(\omega) = \dloc{\tau}$
			and 
			$X^{k+1}(\omega) = F_{\tau, i} \left(X^k(\omega), R^{k+1} \right)$
			for some $i \in \{1, \cdots, j_{\tau}\}$
		\end{enumerate}
		where $R^k$ denotes the vector of random samples drawn at the $k$-th step of the computation.

	\end{definition}
	
	We can show (see Lemma 2 in \cite{SchreuderO19}) that $\{X^k, L^k\}$ is adapted to the (natural) filtration $\{\calF_k\}$ where
	$\calF_k := \sigma(X^0, \widetilde{R}^1, \cdots, \widetilde{R}^k)$ where each $\widetilde{R}^i$ consists of the random variables $R^i$ from Definition~\ref{def: pts} and the discrete choice of the update function.
	Intuitively, $\calF_k$ contains information on the initial value $\left( X_0, l_0 \right)$ and the first $k$ random samples. 
	In other words, $\calF_{k}$ contains the information of the program run up to the $k$-th step of the computation.
	Thus  
	$\left\lbrace X^k, L^k \right\rbrace$ being adapted to ${\calF_k}$ states that $\left\lbrace X^k, L^k \right\rbrace$ is determined by the information denoted by $\calF_{k}$.
	
	As invariants we consider functions of the program variables. Their expected value of the next step can be expressed by the current values of the program variables.
	This concept was used in \cite{a,b,d,h,DBLP:conf/tacas/KuraUH19}.\footnote{
		Note that in \cite{b,h} there are no (program) locations defined. In this case there is one expression for the entire program.
	}
	
	\begin{definition}[Pre-expectation]\rm
		\label{def: pre-expectation}
		Let $\tau \in T$ and $h: \mathbb{R}^n \times L \times \mathbb{N} \rightarrow \mathbb{R}$ be a measurable function over valuations of the program variables, locations, and the step counter.
		Then the {\em pre-expectation} pre$\mathbb{E}(h, \tau) : \mathbb{R}^n \times \mathbb{N} \rightarrow \mathbb{R}$ is the function
		\[\textstyle
		(x, k) \mapsto 
		\mathbb{E}(h(f_{\tau}(x,R), \dloc{\tau}, k)) 
		= 
		\sum_{i = 1}^{j_{\tau}} p_{\tau,i} \, 
		\mathbb{E}_R(h(F_{\tau,i}(x, R), \dloc{\tau},k))
		\]
		and the pre-expectation\index{pre-expectation} for the PTS, pre$\mathbb{E}(h): \mathbb{R}^n \times L \times \mathbb{N} \rightarrow \mathbb{R}$, is given by 
		\[
		(x, l, k) \mapsto \sum_{i = 1}^{p} [(x \models \phi_i) \wedge (l = \sloc{\tau_i})] \cdot \mathrm{pre}\mathbb{E}(h,\tau_i)(x,k)
		\]
		where $\mathbb{E}_{R}$ is the expectation w.r.t.~the distribution of the random variables $R$ and $[{-}]$ is the Iverson bracket. 
	\end{definition}
	
	
	\subsection*{Probabilistic invariants}
	\label{sec:invariant}
	Mathematically, our probabilistic invariants are martingales, i.e.~sequences of random variables that are expectation invariant in each step.
	The history of the  computation is denoted by the  filtration $\{\calF_k\}_{k \in \mathbb{N}} = \{\sigma(X_0, \widetilde{R}^1, \widetilde{R}^2, \cdots, \widetilde{R}^k)\}_{k \in \mathbb{N}}$.
	
	These martingales can be constructed via the pre-expectation.
	
	
	\begin{restatable*}{lemma}{lemmatilde}
		\label{lemma tilde}
		Let $\Pi$ be a PTS with operational semantics $\{X^k, L^k\}_{k \in \mathbb{N}}$, and $h: \mathbb{R}^n \times L \times \mathbb{N} \rightarrow \mathbb{R}$ be a (measurable) function where $h(-, l, -) : \mathbb{R}^n \times \mathbb{N} \rightarrow \mathbb{R}$ is a polynomial $P_{l}$ for each $l \in L$.
		Then 
		\begin{enumerate}[(i)]
			\item $\mathbb{E}[h(X^{k+1}, L^{k+1}, k+1) \mid \calF_k] = \mathrm{pre}\mathbb{E}(h)(X^k, L^k, k+1)$ for all $k \in \mathbb{N}$.
			\item Further, if $\mathrm{pre}\mathbb{E}\left(h \right)(x, l, k+1) \leq h(x, l, k)$ for all $x \in \mathbb{R}^n, l \in L$ and $k \in \mathbb{N}$,		
			then $\left\{h(X^{k}, L^{k}, k)\right\}_{k \in \mathbb{N}}$ is a supermartingale adapted to $\{\calF_k\}_{k \in \mathbb{N}}$.
		\end{enumerate}
	\end{restatable*}

	\begin{remark}\label{rem: replacing powers} 
		\begin{enumerate}[(i)]	
			\item It follows that
			$\mathrm{pre}\mathbb{E}(h)(X^k)$ can be computed by replacing powers of random variables $r_i^a$ by their respective moments $\mathbb{E} \left( r_i^a \right)$ in the expression $h(X^k)$. 
			\item \label{rem: theorem martingale}
			By replacing $h$ by $-h$ the claim holds also for (sub)martingales.
			\item\label{rem: past}
			This lemma could prove positive a.s.~termination:
			Suppose $h(x,l,k) < 0$ iff~$l = l_F$, i.e.~the program has terminated.
			Suppose $\exists K < 0$, $\exists \varepsilon > 0$ s.t.~$\mathrm{pre}\mathbb{E}(h)(x, l, k+1) \leq h(x, l, k) - \varepsilon \cdot 1_{\{h(x, l, k) \geq 0\}}$ and $h(x, l, k) \geq K$.
			Then by \cite[Proposition 1]{d} the stopping time $T := \min\{n \in \mathbb{N} \mid h(X^n, L^n, n) < 0\}$ is a.s.~finite and
			$\mathbb{E}(T) \leq (\mathbb{E} (h(X^0, L^0, 0)) -K)/\varepsilon$.				
		\end{enumerate}
	\end{remark}
	Therefore, we construct martingales via the pre-expectation.
	If $\{M_k\}_{k \in \mathbb{N}}$ is a martingale, $\mathbb{E}(M_k) = \mathbb{E}(M_0)$ holds for all $k \in \mathbb{N}$  
	\cite[page 458]{borokov}.
	Hence $\mathbb{E}(M_k)$ is an invariant. 
	However, this does not imply $\mathbb{E}(M_{T}) = \mathbb{E}(M_0)$ where $T: \Omega \rightarrow  (\mathbb{N}  \cup \{\infty\}) $ is the stopping time representing the total number of computation steps executed.
	In particular, for a while loop such an invariant can be violated upon exiting the loop.

	\begin{notation}
		\label{notation}
		We write $\nstep{x_i}{j}$ for the \emph{random variable} describing the distribution of the program variable $x_i$ in the $j$th step of the PTS computation.
	\end{notation}
	
	\begin{example}[Martingale betting strategy]
		The gambler starts by wagering 1 unit on an evens bet. 
		So long as she loses, she continues by wagering twice the most recent bet on the next play.
		Thus when she eventually (and inevitably) wins, the amount won will cover all her previous loses and profit her by one unit.
		
		\medskip
		
		\noindent\begin{minipage}{0.4\textwidth}
			\centering
			\begin{algorithmic}
				\STATE {$x_1, x_2 := 1, 0$} 
				\WHILE {$x_2 \leq 0$}
				\STATE {$x_1 := 2 \cdot x_1$}
				\STATE {{$r_1 \sim \mathrm{uniform}(0,1)$}}
				\IF {$r_1 \leq 0.5$}
				\STATE {$x_2 := x_2 - x_1$}
				\ELSE
				\STATE {$x_2 := x_2 + x_1$}
				\ENDIF
				\ENDWHILE
			\end{algorithmic}
		\end{minipage}%
		\begin{minipage}{0.6\textwidth}	
			Consider the while loop
			where {$x_1$ (current bet) and $x_2$ (her stake thus far)} are the program variables and $r_1$ is the random variable.
			The expression
			$M_k := \nstep{x_2}{k}$ gives a martingale for this loop. Hence for all $k \in \mathbb{N}$, 
			$\mathbb{E}(M_k) = 0$
			$< 1 = \mathbb{E}(M_{T})$
			where $T := \inf\{k \in \mathbb{N} \mid \nstep{x_2}{k} > 0 \}$ is the exit time of the loop.
			
			\medskip
			
			Fortunately, the Optional Stopping Theorem guarantees $\mathbb{E}(M_{\tau}) = \mathbb{E}(M_0)$, 
			provided certain preconditions 
			are satisfied.
		\end{minipage}		
	\end{example}

	
	
	\section{Linear programs and the (PDB) precondition}
	\label{sec: linear}
	
	Linear probabilistic transition systems that are positively almost-sure terminating are  readily amenable to automated analysis:
	the search for invariants can be reduced to that of deterministic (non-probabilistic) PTS.
	
	
	\begin{definition}\rm 
		\label{def: corresponding pts}
		Let $\Pi = \langle W, X, R, X_0, L, l_0, l_F, T \rangle$ be a PTS.
		The {\em corresponding deterministic transition system (CDTS)} $\Pi^{\mathsf{det}}$ is a tuple
		$\langle X, X_0, L, l_0, l_F, T^{\mathsf{det}} \rangle$ where (using the notation of Def.~\ref{def: pts}) 
		$X = \{x_1, \cdots, x_n\}$ are the program variables from the PTS $\Pi$, and $T^{\mathsf{det}} = \{\tau^{\mathsf{det}}_1, \tau^{\mathsf{det}}_2, \cdots, \tau^{\mathsf{det}}_p\}$ is a finite set of transitions.
		For each transition $\tau = \langle \sloc{\tau}, \phi_{\tau}, f_{\tau}, \dloc{\tau} \rangle \in T$, there is a unique $\tau^{\mathsf{det}} = \langle \sloc{\tau}, \phi_{\tau}, f^{\mathsf{det}}_{\tau}, \dloc{\tau} \rangle \in T^{\mathsf{det}}$, where the deterministic update function, $f^{\mathsf{det}}_{\tau}: \mathbb{R}^{n} \rightarrow \mathbb{R}^{n}$, is defined by
		\(
		f^{\mathsf{det}}_{\tau}(x) := \sum_{i = 1}^{j_{\tau}} p_{\tau, i} \cdot F_{\tau, i}(x, \mu)
		\)
		with $\mu$ denoting the expectation of the random samples, and $F_{\tau, i}$ the constituent polynomials of the update function $f_\tau$ of $\Pi$.
		
	\end{definition}
	
	\begin{definition}\rm 
		\label{def: deterministic pre-expectation}
		Let $h: \mathbb{R}^n \times L \times \mathbb{N} \rightarrow \mathbb{R}$ be a measurable function, and let $\tau \in T^{\mathsf{det}}$ be a transition of $\Pi^{\mathsf{det}}$, the CDTS of $\Pi$.
		The {\em deterministic pre-expectation} pre$\mathbb{E}^{\mathsf{det}}(h, \tau) : \mathbb{R}^n \times \mathbb{N} \rightarrow \mathbb{R}$ is the function
		$(x, k) \mapsto h(f^{\mathsf{det}}_{\tau}(x), \dloc{\tau}, k)$.
		The \emph{pre-expectation\index{pre-expectation} for $\Pi^{\mathsf{det}}$} is 
		pre$\mathbb{E}^{\mathsf{det}}(h): \mathbb{R}^n \times L \times \mathbb{N}\rightarrow \mathbb{R}$ defined by
		\[\textstyle
		(x, l, k) \mapsto
		\sum_{i = 1}^{p} [(x \models \phi_i) \wedge (l = \sloc{\tau_i})] \cdot \mathrm{pre}\mathbb{E}^{\mathsf{det}}(h,\tau_i)(x, k).
		\]
	\end{definition}
	
	The expectation is linear, which gives us a correspondence between linear invariants of the probabilistic transistions and their determinised counterparts. 

	\begin{restatable*}{lemma}{linearOST}
		\label{lemma: linear ost non-appendix}
		Let $\Pi = \langle W, X, R, X_0, L, l_0, l_F, T \rangle$ be a linear PTS and $\tau \in T$.
		Given a function $h: \mathbb{R}^n \times L \times \mathbb{N} \rightarrow \mathbb{R}$ defined by $h(x, l, k) = \sum_{l'} \ [l' = l] \cdot P_{l'}(x, k)$ where $P_{l'}$ is linear for each $l' \in L$. Then 
		\begin{enumerate}[(i)]
			\item for all $k \in \mathbb{N}$ and $x \in \mathbb{R}^n$,
			\(
			\mathrm{pre}\mathbb{E}(h, \tau)(x, k)  
			= \mathrm{pre}\mathbb{E}^{\mathsf{det}}(h, \tau)(x, k)
			\)
			
			\item there is $K > 0$ s.t.~for all $k \in \mathbb{N}$, 
			$ 
			\mathbb{E}[| h(X^{k+1}, L^{k+1}, k+1) - h(X^{k}, L^k, k) | \mid \calF_{k}]
			\leq		\left|
			\mathrm{pre}\mathbb{E}^{\mathsf{det}}(h)(X^k, L^k, k+1)
			- h(X^{k}, L^k, k) \right| + K.
			$
		\end{enumerate}
	\end{restatable*}
	
	

	Lemma~\ref{lemma: linear ost non-appendix} enables use to prove the correctness of our reduction.
	
	\begin{theorem}[Correspondence]\label{thm:correspondence}
		Let $\Pi$
		be a linear PTS
		which is positively almost-sure terminating,
		and let $h: \mathbb{R}^n \times L \times \mathbb{N} \rightarrow \mathbb{R}$ be defined by $h(x, l, k) = \sum_{l' \in L} \ [l' = l] \cdot P_{l'}(x, k)$ for $P_{l'}$  linear.
		If either of the following conditions is satisfied,
		\begin{enumerate}[(i)]
			\item 
			$\mathrm{pre}\mathbb{E}^{\mathsf{det}}(h)(x, l, k+1) = h(x, l, k)$ for all $x \in \mathbb{R}^n$, $l \in L$,
			\item
			$\mathrm{pre}\mathbb{E}^{\mathsf{det}}(h)(x, l, k+1) \leq h(x, l,k )$
			and there exists $K > 0$ such that 
			for all $x \in \mathbb{R}^n$, $l \in L$,   $\left| \mathrm{pre}\mathbb{E}^{\mathsf{det}}(h)(x, l, k+1) - h(x, l, k) \right| \leq K$,
		\end{enumerate}
		then 
		$\{ h(X^k, L^k, k) \}_{k \in \mathbb{N}}$ is a (super)martingale for $\Pi$ satisfying the (PDB) precondition of the OST~\ref{thm: OST}.
	\end{theorem}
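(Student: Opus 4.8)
The plan is to assemble the two preceding lemmas. Lemma~\ref{lemma: linear ost non-appendix} supplies the exact correspondence between the probabilistic and the determinised pre-expectations together with an $L^1$ difference estimate, while Lemma~\ref{lemma tilde} converts a pre-expectation inequality into a (super)martingale. The argument therefore splits cleanly into two tasks: establishing the (super)martingale property, and then verifying the two halves of the (PDB) precondition separately.

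First I would lift the per-transition identity of Lemma~\ref{lemma: linear ost non-appendix}(i) to the whole PTS. By Definitions~\ref{def: pre-expectation} and~\ref{def: deterministic pre-expectation}, both $\mathrm{pre}\mathbb{E}(h)$ and $\mathrm{pre}\mathbb{E}^{\mathsf{det}}(h)$ are the \emph{same} Iverson-weighted sum $\sum_i [(x \models \phi_i) \wedge (l = \sloc{\tau_i})] \cdot \mathrm{pre}\mathbb{E}^{(\cdot)}(h, \tau_i)(x, k)$ of their respective per-transition versions. Since those agree transition-by-transition by Lemma~\ref{lemma: linear ost non-appendix}(i), I obtain $\mathrm{pre}\mathbb{E}(h)(x, l, k) = \mathrm{pre}\mathbb{E}^{\mathsf{det}}(h)(x, l, k)$ for all $x, l, k$. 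Consequently each hypothesis on $\mathrm{pre}\mathbb{E}^{\mathsf{det}}(h)$ may be read verbatim as a hypothesis on $\mathrm{pre}\mathbb{E}(h)$. For the (super)martingale property I would then invoke Lemma~\ref{lemma tilde}(ii): in case~(ii) the hypothesis $\mathrm{pre}\mathbb{E}^{\mathsf{det}}(h)(x, l, k+1) \leq h(x, l, k)$ becomes $\mathrm{pre}\mathbb{E}(h)(x, l, k+1) \leq h(x, l, k)$, so $\{h(X^k, L^k, k)\}$ is a supermartingale adapted to $\{\calF_k\}$; in case~(i) the hypothesis is an equality, so applying the same lemma to both $h$ and $-h$ (as in Remark~\ref{rem: replacing powers}, part~\ref{rem: theorem martingale}) yields both the super- and the submartingale inequalities, hence a genuine martingale.

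It then remains to check (PDB). The condition $\mathbb{E}(T) < \infty$ is precisely the assumption that $\Pi$ is positively almost-sure terminating, where $T$ is the run-time. For difference-boundedness I would feed in the estimate of Lemma~\ref{lemma: linear ost non-appendix}(ii),
\[
\mathbb{E}\bigl[\,\lvert h(X^{k+1}, L^{k+1}, k+1) - h(X^{k}, L^k, k)\rvert \mid \calF_{k}\bigr]
\leq
\bigl\lvert \mathrm{pre}\mathbb{E}^{\mathsf{det}}(h)(X^k, L^k, k+1) - h(X^{k}, L^k, k) \bigr\rvert + K',
\]
where $K'>0$ is the constant it provides. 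In case~(i) the pre-expectation term vanishes pointwise, so the right-hand side is bounded by $K'$ and one may take $L := K'$. In case~(ii) the theorem's own hypothesis bounds that term by the constant $K$, so the right-hand side is at most $K + K'$ and one takes $L := K + K'$. Either way the difference-bound holds with a single constant independent of $k$, completing (PDB) and hence the theorem.

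Because the pre-expectation correspondence and the $L^1$ difference estimate have already been discharged in Lemma~\ref{lemma: linear ost non-appendix}, I do not expect a genuine obstacle in this proof: the only points demanding care are keeping the two constants $K$ and $K'$ apart when forming the final bound $L$, and remembering that the equality case~(i) must be \emph{upgraded} from a supermartingale to a bona fide martingale via Remark~\ref{rem: replacing powers}, rather than being read off Lemma~\ref{lemma tilde}(ii) directly.
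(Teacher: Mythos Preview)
Your proposal is correct and follows essentially the same approach as the paper: invoke Lemma~\ref{lemma tilde}(ii) for the (super)martingale property (after transferring the hypothesis via Lemma~\ref{lemma: linear ost non-appendix}(i)), then combine positive almost-sure termination with the estimate of Lemma~\ref{lemma: linear ost non-appendix}(ii) to obtain (PDB). The paper's proof is a two-sentence citation of these lemmas, whereas you have spelled out the intermediate steps---lifting the per-transition identity to the whole PTS, handling the martingale case via $\pm h$, and tracking the two constants $K$ and $K'$---but the substance is identical.
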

	
	\begin{proof}
		The (super)martingale property is a direct consequence of Lemma~\ref{lemma tilde}(ii).
		Thanks to Lemma \ref{lemma: linear ost non-appendix} and the positive almost-sure termination of $\Pi$, the (PDB) precondition of the OST \ref{thm: OST} is satisfied.
		\qed
	\end{proof}
	
	\begin{example} [Nested loop]
		\label{eg: nested loop}
		Consider the following nested-loop program \cite{d,DBLP:journals/corr/FengZJZX17} and its corresponding determinised program.
		
		\medskip
		
		\noindent\begin{minipage}{0.45\textwidth}
			
			\begin{algorithmic}
				\STATE {$x := 0$}
				\STATE {$z := 0$}
				\WHILE {$x \leq m$} 
				\STATE {$y := 0$}
				\WHILE {$y \leq n$} 
				\STATE {$\mathtt{r_1} \sim \mathrm{uniform}(-0.1, 0.2)$}
				\STATE {$y := y + \mathtt{r_1}$}
				\ENDWHILE
				\STATE {$\mathtt{r_2} \sim \mathrm{uniform}(-0.1, 0.2)$}
				\STATE {$x := x + \mathtt{r_2}$}
				\STATE {$z := z + 1$}
				\ENDWHILE
			\end{algorithmic}
		\end{minipage}
		\hfill
		\begin{minipage}{0.45\textwidth}
			\begin{algorithmic}
				\STATE {$x := 0$}
				\STATE {$z := 0$}
				\WHILE {$x \leq m$} 
				\STATE {$y := 0$}
				\WHILE {$y \leq n$} 
				\STATE {$y := y + 0.05$}
				\ENDWHILE
				\STATE {$x := x + 0.05$}
				\STATE {$z := z + 1$}
				\ENDWHILE
			\end{algorithmic}
		\end{minipage}
		
		\medskip
		
		\noindent
		Let $T: \Omega \rightarrow \mathbb{N} \cup \{\infty\}$ denote the total runtime of the program; 
		and for $k \in \mathbb{N}$, let $T_1^k: \Omega \rightarrow \mathbb{N} \cup \{\infty\}$ denote the run-time of the inner loop in the $k$-th run of the outer loop, and let $T_2: \Omega \rightarrow \mathbb{N} \cup \{\infty\}$ denote the number of times the outer loop body is executed.
		(Here we use the semantics of nested stopping times due to Fioriti and Hermanns \cite{c}.)
		
		Since the two loop bodies are independent, we have $\mathbb{E}(T_1^k) = \mathbb{E}(T_1^1)$ for all $k \in \mathbb{N}$.
		Moreover, $\mathbb{E}(T) = \mathbb{E}(T^1_1 \cdot T_2) = \mathbb{E}(T^1_1) \cdot \mathbb{E}(T_2)$.
		
		We first show that both loops terminate positively almost surely.
		For the inner loop consider $h(y) = n - y$.
		By Lemma \ref{lemma: linear ost non-appendix}, $\mathrm{pre}\mathbb{E}(h)(y) 
		= [y \leq n] \cdot h(y + 0.05) + [y > n] \cdot h(y) 
		= h(y) - 0.05 \cdot [h(y) \geq 0]$.
		Furthermore, as $\mathtt{r_1} \leq 0.2$, we have $h(y) \geq -0.2$ for all values of $y$ that arise within the inner loop.
		Moreover, $T_1^k = \min\{i \in \mathbb{N} \mid h(Y^i) < 0\}$ for all $k \in \mathbb{N}$.
		As a consequence by Remark \ref{rem: replacing powers}(\ref{rem: past}),
		$\mathbb{E}(T_1^k) \leq (n - (-0.2))/0.05= 20n+4$ for all $k \in \mathbb{N}$.
		Similarly, we can conclude 
		$\mathbb{E}(T_2) \leq 20m + 4$.
		Together we have $\mathbb{E}(T) \leq 400mn + 80(m+n)+16$.
		
		If we only consider
		the inner loop, 
		$\{\nstep{y}{k} - 0.05 k\}$ is clearly an invariant in both cases. So by Theorem \ref{thm:correspondence}, 
		$0 = \mathbb{E}(\nstep{y}{0} - 0.05 \cdot 0) = \mathbb{E}(\nstep{y}{T_1^1} - 0.05 \cdot T_1^1)$.
		Hence
		$n \leq \mathbb{E}(\nstep{y}{T_1^1}) = 0.05 \cdot \mathbb{E}(T_1^1)$.
		In particular, $\mathbb{E}(T_1^1) \geq 20n$.
		Similarly the computation of the variable $x$ in the outer loop does not depend on the inner loop.
		Ranging over the number of times the outer loop is executed $\{\nstep{x}{k} - 0.05 k\}$, $\{\nstep{z}{k} - k\}$ and $\{\nstep{z}{k} - 20 \nstep{x}{k}\}$ are invariants for both outer loops, by Theorem \ref{thm:correspondence}.
		Direct computation shows
		$\mathbb{E}(\nstep{x}{T_2}) = 0.05 \cdot \mathbb{E}(T_2)$, $\mathbb{E}(\nstep{z}{T_2}) = \mathbb{E}(T_2) = 20 \cdot \mathbb{E}(\nstep{x}{T_2}) \geq 20 m$.
		This gives a lower bound on the total run-time:
		$\mathbb{E}(T) = \mathbb{E}(T_1) \cdot \mathbb{E}(T_2) \geq 400mn$.
	\end{example}

	\section{Geometric persistence and the (IUD) precondition}
	\label{sec: geometric persistence}

	Let $\Pi$ be a PTS, and take a function $P: \mathbb{R}^n \times L \rightarrow \mathbb{R}$ such that $P(-, l)$ is a polynomial for each location $l \in L$. 
	Consider the random variable $P(X^k, L^k)$, where $X^k$ and $L^k$ are random variables giving the distributions of the program variables and location in the $k$-th step of the computation of $\Pi$.
	It follows that $\{P(X^k, L^k)\}_{k \in \mathbb{N}}$ is adapted to the (natural) filtration $\{\calF_k\}_{k \in \mathbb{N}}$; 
	moreover $\mathbb{E}(P(X^k, L^k)) < \infty$ because all moments exist by assumption.
	By Doob's Decomposition (Theorem~\ref{thm: Doob's decomposition}), we have that $\{M_k\}_{k \in \mathbb{N}}$, where
	\begin{equation}
	M_k := 
	\left\{
	\begin{array}{ll} 
	P(X^0, L^0) & \ \hbox{if $k = 0$}\\
	P(X^0, L^0) + \sum_{i = 1}^{k} \big(P(X^i, L^i) - \mathbb{E}[P(X^i, L^i) \mid \calF_{i-1}]\big) & \ \hbox{otherwise} 
	\end{array} \right.
	\label{eq:Doob}
	\end{equation}
	is a martingale.
	%
	%
	In this Section we present a method to synthesize such functions $P$ that the derived martingale $\{M_k\}_{k \in \mathbb{N}}$ is guaranteed to satisfy precondition (IUD) of the OST~\ref{thm: OST}.

	
	\subsection{A synthesis algorithm and its formulation as a SOS problem}
	\label{sec: algorithm}
	
	In order to meet the precondition (IUD), we need to construct an integrable function $g$ that uniformly dominates $|M_k|$.
	Inspired by Chakorav et al.~\cite{h}, we exploit persistence properties of the PTS (if they exist) to synthesize $g$, using sum-of-squares optimisation techniques.
	
	
	Our synthesis algorithm is simple. 
	Given a PTS $\Pi$, we solve the problem: 
	\begin{quote} Given $d \in \mathbb{N}$, construct function $V: \mathbb{R}^n \times L \rightarrow \mathbb{R}$ s.t.~for all $l \in L$:
		
		\begin{description}
			\item[(a1)] 
			$V(-, l)$ is a SOS polynomial of degree at most $d$, and
			
			\item[(a2)]
			$\exists \alpha \in [0,1) \; . \; \forall x \in \mathbb{R}^n \; . \;
			\mathrm{pre}\mathbb{E}(V)(x, l) \leq \alpha  \cdot V(x, l)$.
		\end{description} 
	\end{quote}
	If a solution $V$ is found, and for each $l \in L$, $P(-, l)$ is a summand in the SOS decomposition of $V(-, l)$ for all $l \in L$, then $P: \mathbb{R}^n \times L \rightarrow \mathbb{R}$ gives rise to a 
	martingale $\{M_k\}$ as defined in (\ref{eq:Doob}), satisfying
	$\mathbb{E}(P(X^0, l_0)) = \mathbb{E}(M_0) = \mathbb{E}(M_{\tau})$. 
	
	
	
	\medskip
	
	Recall Def.~\ref{def: pre-expectation},
	$
	\mathrm{pre}\mathbb{E}(V)(x, l)
	=
	\sum_{i = 1}^{p} [(x \models \phi_i) \wedge (l = \sloc{\tau_i})] \cdot \mathrm{pre}\mathbb{E}(V,\tau_i)(x)
	$
	where $p \in \mathbb{N}$ is the number of transitions, and the transition guards $\phi_i$ are conjunctions of polynomial inequalities.
	Condition (a2) is equivalent to
	\begin{equation}
	\exists \alpha \in [0, 1) \, . \, 
	\forall \tau \in T \, . \,
	\forall x \in \mathbb{R}^n\, . \, 
	\phi_{\tau}(x) \to \big(\underbrace{\alpha \cdot V (x, \sloc{\tau}) - \mathrm{pre}\mathbb{E}(V,\tau)(x)}_{Q(x)} \geq 0\big)
	\label{eq:SOS}
	\end{equation}
	
	By assumption $\phi_{\tau} = \bigwedge_{j=1}^{j_\tau} (p_j \geq 0)$ for some polynomials $p_1, \cdots, p_{k_\tau}$.
	The non-negativity of $Q(x)$ on a set constrained by polynomial inequalities (such as $\phi_\tau$) can be witnessed by suitable SOS polynomials \cite[\S 3.2.4, pp.~78--79]{semidefiniteopt}. 
	Thus (\ref{eq:SOS}) is implied by: 
	\[\begin{array}{l}
	\exists \alpha \in [0, 1) \; . \; 
	\forall \tau \in T \; . \; \exists q_0, \cdots, q_{k_\tau} \in \mathbf{SOS} \; . \; \forall x \in \mathbb{R}^n \; . \hfill \\
	\qquad \qquad \qquad \alpha \cdot V(x, \sloc{\tau}) - \mathrm{pre}\mathbb{E}(V,\tau)(x) = q_0(x) + \sum_{j = 1}^{k_\tau} {q_j(x)} \cdot p_j(x).
	\end{array}\]
	writing $\mathbf{SOS}$ for the set of SOS polynomials.
	
	As most SOS optimisation solvers do not handle strict inequalities, we choose some $\varepsilon > 0$ (e.g.~$10^{-6}$), and assume $\alpha \leq 1 - \varepsilon$.
	Thus we can encode the preceding condition by
	
	\begin{description}
		\item[(a2')] $\exists \epsilon > 0 \; . \; \forall \tau \in T \; . \; \exists
		q_0, \cdots, q_{k_\tau} \in \mathbf{SOS} \; . $\\ 
		\hspace*{2cm} $(1 - \epsilon) \cdot V(-, \sloc{\tau}) - \mathrm{pre}\mathbb{E}(V, {\tau}) - q_0 - \sum_{j = 1}^{k_\tau} {q_j} \cdot p_j \in \mathbf{SOS}$
	\end{description}
	
	%
	To compute
	$\mathrm{pre}\mathbb{E}(V,\tau)(x) = \sum_{i = 1}^{j_{\tau}} p_{\tau,i} \cdot \mathbb{E}_R(V(F_{\tau,i}(x,r), \dloc{\tau}))$,
	we only need to know the moments.
	The random samples are independent of the previous computation and of each other. The expected value $\mathbb{E}_R(V(F_{\tau,i}(x,r), \dloc{\tau}))$ is obtained by composing $V(-, \dloc{\tau})$ and $F_{\tau,i}$ 
	and replacing all occurrences of a power of a random variable $r_l^m$ by its moment $\mathbb{E}(r_l^m)$ (see Remark~\ref{rem: replacing powers}).
	
	\subsection{Correctness and complexity}
	
	Let $V: \mathbb{R}^n \times L \rightarrow \mathbb{R}$ be a non-constant function satisfying (a1) and (a2), 
	and let $P: \mathbb{R}^n \times L \rightarrow \mathbb{R}$ satisfy $V(x, l) \geq P^2(x, l)$ for all $x \in \mathbb{R}^n$ and $l \in L$.
	We prove that the martingale $\{M_k\}$ induced by $P$ (see (\ref{eq:Doob})) is a probabilistic invariant via the (IUD) precondition of the OST~\ref{thm: OST}.
	To achieve this, we construct an integrable function $g: \Omega \rightarrow \mathbb{R}$ satisfying $|M_k| \leq g$ for all $k \in \mathbb{N}$. 
	Now, it follows from Lemma \ref{lemma tilde} that
	\[
	\begin{array}{rll}
	M_k 
	&=& 
	P(X^0, L^0) + \sum_{i=1}^k \left(P(X^i, L^i) - \mathrm{pre}\mathbb{E}(P)(X^{i-1}, L^{i-1}) \right)
	\\ 
	&=& 
	\sum_{i=0}^k P(X^i, L^{i}) - \sum_{i=1}^k \mathrm{pre}\mathbb{E}(P)(X^{i-1}, L^{i-1})
	\\
	&=& 
	\sum_{i=0}^k P(X^i, L^{i}) - \sum_{i=0}^{k-1} \mathrm{pre}\mathbb{E}(P)(X^{i}, L^{i})
	\\ 
	&=& 
	\sum_{i=0}^{k-1} \left( P(X^i, L^i) - \mathrm{pre}\mathbb{E}(P)(X^{i}, L^i) \right) + P(X^k, L^k).
	\end{array}
	\]
	Thus we can set $g := \sum_{i=0}^{\infty} |P(X^i, L^i) - \mathrm{pre}\mathbb{E}(P)(X^{i}, L^i)| + \sum_{i=0}^{\infty} | P(X^k, L^k)|$. 
	Trivially we have $|M_k| \leq g$ for all $k \in \mathbb{N}$.
	To prove that $g$ is integrable, we establish the following:
	\begin{enumerate}[(i)]
		\item $\mathbb{E} \left( \sum_{i = 0}^{\infty} |P(X^i, L^i) - \mathrm{pre}\mathbb{E}(P)(X^i, L^i)| \right) < \infty$
		(\cite[Lemma~\ref{lem:exp-inf-sum-p-preEP}]{SchreuderO19})
		\item $\mathbb{E} \left( \sum_{i = 0}^{\infty} |P(X^i, L^i)| \right) < \infty$ (\cite[Lemma~\ref{lemma: |P| < infty}]{SchreuderO19})
	\end{enumerate}


	\paragraph{Complexity.}
	
	SOS polynomials are non-negative. 
	Of course, not every non-ne\-ga\-tive polynomial is a SOS polynomial 
	\cite[page 59]{semidefiniteopt}.	
	However, SOS optimisation problems can be solved in polynomial time, whereas deciding non-negativity of multivariate polynomials is NP-hard in general (see \cite[p.~66]{semidefiniteopt}).
	
	As discussed in Sec.~\ref{sec: background}, the SOS Optimisation Problem can be solved by reduction to the Semidefinite Programming problem. 
	The complexity of the reduction is polynomial in the number of new monomials. 
	For each location $l \in L$, we have monomials over $n$ program variables at degree up to $2 d$ where $d$ is here the maximal degree of our invariant generating polynomial $P$. This gives us $O(|L| \cdot n^{2 d})$ monomials.	
	By \cite[Sec.~2.3]{semidefiniteopt} the SDP program is also solved in polynomial time over $O(|L| \cdot n^{2 d})$.
	%
	Thus for $d \geq 0$ fixed, this approach has a polynomial time complexity.

	\section{Implementation and examples}
	\label{sec:examples}
	
	We have implemented the feasibility problem for $(a1)$, $(a2')$ using the SOS optimisation solver YALMIP \cite{yalmip2}, which is in turn based on the SDP optimisation solver SeDuMi \cite{sedumi}.
	Whenever the objective function is not specified, a feasible solution is computed and returned instead.
	The condition $(a1)$ can be quickly implemented using the command \textsf{sos}. 
	For this we used the function \textsf{polynomial} that allows us to create parametrized polynomials of chosen degree over given variables.
	The other condition $(a2')$ was implemented by the method outlined in Section~\ref{sec: algorithm}. 
	In order to be able to replace monomials of random variables by respective moments, we used the function \textsf{coefficients} that splits a polynomial for a given variable into monomials and their respective coefficients with respect to that variable.

	\subsection{Example: a probabilistic while loop}
	
	
	Here we consider the while loop from Example \ref{ex: bsp}.
	For several combinations of mean and variance we can find persistence properties of the form 
	$V(x_1, x_2) = (a x_1 + b x_2)^2 + (c x_1 + d x_2)^2$.
	For instance, if we choose $r_1 \sim \textrm{Normal}(-3, 1)$ and $r_2 \sim \textrm{Normal}(2, 2)$ we obtain
	(rounded)
	$V(x_1, x_2) = (0.74227x_1-0.086046x_2)^2 + (0.02481x_1+0.21398x_2)^2$.
	Thus summands of $V$ have the form $(a x_1 + b x_2)^2$ for fixed numbers $a, b$.
	Such a polynomial $P$ defines an invariant (using Notation~\ref{notation}):
	\[
	M_k := 
	\left\{
	\begin{array}{ll} 
	a \nstep{x_1}{0} + b \nstep{x_2}{0} 
	&
	\hbox{if $k = 0$}\\
	a \nstep{x_1}{0} + b \nstep{x_2}{0} + \sum_{i = 1}^{k} \big(a \nstep{x_1}{i} - b \nstep{x_2}{i} - \mathrm{pre}\mathbb{E}(P)(\nstep{x_1}{i-1}, \nstep{x_2}{i-1})\big) \quad 
	& \hbox{if $k > 0$} 
	\end{array} 
	\right.
	\]
	If we write $\mu_1 = \mathbb{E}(r_1)$ and $\mu_2 = \mathbb{E}(r_2)$ for the respective expected values, then a quick calculation yields 
	\begin{align*}	
	\mathrm{pre}\mathbb{E}(P)(x_1, x_2) 
	= & \ \mathbb{E}_{R} \left( a x_1 + a r_1 x_1 x_2 + b x_2 + b r_2 x_1 x_2 \right) 
	\\
	= & \ a x_1 + b x_2 + (a \mu_1 + b \mu_2) x_1 x_2
	\end{align*}
	So we can write $\left\lbrace M_k \right\rbrace_{k \in \mathbb{N}}$ as
	\[	
	M_k := 
	\left\{
	\begin{array}{ll} 
	a \nstep{x_1}{0} + b \nstep{x_2}{0} 
	&
	\hbox{if $k = 0$}\\
	a \nstep{x_1}{k} + b \nstep{x_2}{k} - \sum_{i = 0}^{k-1} (a \mu_1 + b \mu_2) \cdot \nstep{x_1}{i} \nstep{x_2}{i} \quad
	& \hbox{if $k > 0$} 
	\end{array} 
	\right.
	\]
	If we denote the loop condition by the stopping time $T = \min \{ n \in \mathbb{N} \mid x_1 > 10 \}$,
	then the following holds.
	\[\begin{array}{ll}
	a \mathbb{E}(\nstep{x_1}{0}) + b \mathbb{E}(\nstep{x_2}{0}) 
	&= \mathbb{E}(M_0) = \mathbb{E}(M_T) 
	\\
	&= a \mathbb{E}(\nstep{x_1}{T}) + b \mathbb{E}(\nstep{x_2}{T}) - \mathbb{E} \left( \sum_{i = 0}^{T-1} (a \mu_1 + b \mu_2) \cdot \nstep{x_1}{i} \nstep{x_2}{i}  \right)
	\end{array}\]
	Hence $a \mathbb{E}(\nstep{x_1}{T}) + b \mathbb{E}(\nstep{x_2}{T}) = a \mathbb{E}(\nstep{x_1}{0}) + b \mathbb{E}(\nstep{x_2}{0}) + \mathbb{E} \left( \sum_{i = 0}^{T-1} (a \mu_1 + b \mu_2) \cdot \nstep{x_1}{i} \nstep{x_2}{i}  \right)$.
	\\
	As we know that $\nstep{x_1}{T} > 10$ we conclude by the monotonicity of the expected value	
	\[\begin{array}{ll}
	b \mathbb{E}(\nstep{x_2}{T}) &= a \mathbb{E}(\nstep{x_1}{0}) + b \mathbb{E}(\nstep{x_2}{0}) + \mathbb{E} \left( \sum_{i = 0}^{T-1} (a \mu_1 + b \mu_2) \cdot \nstep{x_1}{i} \nstep{x_2}{i}  \right) - a \mathbb{E}(\nstep{x_1}{T})
	\\
	&\geq a \mathbb{E}(\nstep{x_1}{0}) + b \mathbb{E}(\nstep{x_2}{0}) + \mathbb{E} \left( \sum_{i = 0}^{T-1} (a \mu_1 + b \mu_2) \cdot \nstep{x_1}{i} \nstep{x_2}{i}  \right) - 10a
	\end{array}\]
	In particular, if $\mu_1 = \mu_2 = 0$, this simplifies to $M_k = a \nstep{x_1}{k} + b \nstep{x_2}{k}$ and thus, $a \mathbb{E}(\nstep{x_1}{T}) + b \mathbb{E}(\nstep{x_2}{T}) = a \mathbb{E}(\nstep{x_1}{0}) + b \mathbb{E}(\nstep{x_2}{0})$.
	In this case, $a = b$ or $a = -b$ holds.

	\subsection{A non-linear Markov system}
	
	\begin{minipage}{0.65\textwidth}
		Chakarov et al.~\cite{h} study a non-linear Markov system similar to the one on the right.
		This system has two states and in each state the two variables, $x_1$ and $x_2$, are updated according to the given statements.
		After every update of the variables this system stays in its current state with probability 0.5. With equal probability it transitions to the other state.
		
		This system can be modelled by a probabilistic transition system with one transition and no random variables. 
	\end{minipage}
	\hfill
	\begin{minipage}{0.32\textwidth}
		\begin{tikzpicture}[->,>=stealth',shorten >=1pt,auto,node distance=2.0cm,
		semithick]
		
		\tikzset{every loop/.style={max distance=5mm}}
		
		\tikzset{
			state/.style={
				rectangle,
				rounded corners,
				fill={rgb:red,4;green,2;yellow,20},
				draw=black, very thick,
				minimum height=2em,
				minimum width=10em,
				inner sep=2pt,
				text centered,
			},
		}
		
		\node[state] 		 (A)	{$\begin{matrix}
			x_1 := x_1 + x_1 x_2\\
			x_2 := \frac{1}{3} x_1 + \frac{2}{3} x_2 + x_1 x_2
			\end{matrix}$};
		\node[state]         (B)     [below of = A] {$\begin{matrix}
			x_1 := x_1 + x_2 + \frac{2}{3} x_1 x_2\\
			x_2 := 2 x_2 + \frac{2}{3} x_1 x_2
			\end{matrix}$};
		
		\path 	(A) 	edge [loop above]	node 	{$0.5$} 	(A)
		(A)		edge [bend left]	node	{$0.5$}		(B)
		(B)		edge [loop below] 	node	{$0.5$}		(B)
		(B)		edge [bend left]	node	{$0.5$}		(A);	
		
		\end{tikzpicture}
	\end{minipage}
	Applying this method we find\footnote{Chakarov et al.~in \cite{h} found the same invariant.} that $V(x_1, x_2) = (x_1-x_2)^2$ satisfies the conditions $(a1)$ and $(a2')$ for $\alpha \leq 0.8$.
	Thus, the polynomial $P(x_1, x_2) = (x_1 - x_2)$ defines a probabilistic invariant via the martingale 
	\[
	M_k := 
	\left\{
	\begin{array}{ll} 
	\nstep{x_1}{0} - \nstep{x_2}{0} 
	&
	\hbox{if $k = 0$}\\
	\nstep{x_1}{0} - \nstep{x_2}{0} + \sum_{i = 1}^{k} \big(\nstep{x_1}{i} - \nstep{x_2}{i} - \mathrm{pre}\mathbb{E}(P)(\nstep{x_1}{i-1}, \nstep{x_2}{i-1})\big) 
	& \hbox{if $k > 0$} 
	\end{array} 
	\right.
	\]
	where $\nstep{x_1}{i}, \nstep{x_2}{i}$ are random variables representing the values of $x_1, x_2$ in the $i$-th step of the computation.
	
	A simple calculation shows
	$\mathrm{pre}\mathbb{E}(P)(x_1, x_2) = \frac{5}{6} \left(x_1 - x_2 \right).$ Hence
	\\
	\[
	M_k = 
	\left\{
	\begin{array}{ll} 
	\nstep{x_1}{0} - \nstep{x_2}{0} & \hbox{if $k = 0$}
	\\
	\nstep{x_1}{k} - \nstep{x_2}{k} + \frac{1}{6} \sum_{i = 0}^{k-1} (\nstep{x_1}{i} - \nstep{x_2}{i}) \quad & \hbox{if $k > 0$} 
	\end{array} 
	\right.
	\]
	Since $\{M_k\}_{k \in \mathbb{N}}$ is a martingale, for all $k \geq 0$,
	\[\begin{array}{rl}
	0 &= \mathbb{E}(M_{k+1})  - \mathbb{E}(M_k) =  \mathbb{E}(M_{k+1}  - M_k) 
	= \mathbb{E} (\nstep{x_1}{k+1} - \nstep{x_2}{k+1} - \frac{5}{6}(\nstep{x_1}{k} - \nstep{x_2}{k}))
	\\
	&= \mathbb{E} \big( (\nstep{x_1}{k+1} - \nstep{x_2}{k+1}) \big) - \mathbb{E} \big(\frac{5}{6}(\nstep{x_1}{k} - \nstep{x_2}{k}) \big).
	\end{array}\]
	By induction, we get the following invariant for this system:
	For all $k \in \mathbb{N}$,
	\(
	\mathbb{E}\big(\nstep{x_1}{k} - \nstep{x_2}{k}\big) = \big(\frac{5}{6}\big)^{k} \cdot \mathbb{E}\big(\nstep{x_1}{0} - \nstep{x_2}{0} \big) 
	\).
	In fact, we can go one step further.
	Straightforward computations show that 
	\[\textstyle
	Y_k := \big(\frac{6}{5}\big)^k \cdot (\nstep{x_1}{k} - \nstep{x_2}{k})
	\qquad \quad
	Z_k := \big(\frac{6}{5}\big)^k \cdot \big| \nstep{x_1}{k} - \nstep{x_2}{k} \big|
	\] 
	are martingales.
	These martingales satisfy the (IUD) precondition of the OST as 
	\(
	|Y_k| = |Z_k| \leq \sum_{i = 0}^{\infty} |Z_i|
	\)
	and
	\[\textstyle
	\mathbb{E}\big( \sum_{i = 0}^{\infty} |Z_i| \big) = \sum_{i = 0}^{\infty} \big(\frac{6}{5}\big)^i \cdot \mathbb{E}\big( |P(\nstep{x_1}{i}, \nstep{x_2}{i})| \big) 
	\leq \sum_{i = 0}^{\infty} \big(\frac{6}{5}\big)^i \cdot \alpha^i \cdot \mathbb{E} \big(V(\nstep{x_1}{0}, \nstep{x_2}{0}) \big)
	\]
	where the last inequality was shown in the proof of Lemma~\ref{lemma: |P| < infty}. 
	As $\alpha \leq 0.8$, i.e.~$\alpha \cdot \frac{6}{5} \leq 0.96$, the sum converges and we obtain indeed $\mathbb{E}\big( \sum_{i = 0}^{\infty} |Z_i| \big) < \infty$.
	\\
	Thus, for all stopping time $T: \Omega \rightarrow (\mathbb{N} \cup \{\infty\})$ the following equations hold.
	\begin{enumerate}[(i)]
		\item $ \mathbb{E} \big( \big( \frac{6}{5} \big)^T \cdot (\nstep{x_1}{T} - \nstep{x_2}{T}) \big) =  \mathbb{E}(Y_T) = \mathbb{E}(Y_0) = \mathbb{E}\big(\nstep{x_1}{0} - \nstep{x_2}{0} \big)$
		\item $ \mathbb{E} \big( \big( \frac{6}{5} \big)^T \cdot |\nstep{x_1}{T} - \nstep{x_2}{T}| \big) =  \mathbb{E}(Z_T) = \mathbb{E}(Z_0) = \mathbb{E} \big( |\nstep{x_1}{0} - \nstep{x_2}{0}| \big)$
	\end{enumerate}
	This is particularly interesting in the case of hitting times. For instance, we can consider the stopping time $T = \min \{ n \in \mathbb{N} \mid |\nstep{x_1}{n} - \nstep{x_2}{n}| = 0.5 \cdot \mathbb{E}(|\nstep{x_1}{0} - \nstep{x_2}{0}|)  \}$ for an initial distribution s.t.~$\mathbb{E}(|\nstep{x_1}{0} - \nstep{x_2}{0}|) > 0$.
	Then
	\[\textstyle
	\mathbb{E} \big( |\nstep{x_1}{0} - \nstep{x_2}{0}| \big)
	= 
	\mathbb{E} \big( \big( \frac{6}{5} \big)^T \cdot |\nstep{x_1}{T} - \nstep{x_2}{T}| \big)
	= 
	0.5 \cdot \mathbb{E}(|x_1^0 - \nstep{x_2}{0}|) \cdot \mathbb{E} \big( \big( \frac{6}{5} \big)^T \big)
	\]
	Hence 
	\(
	2 
	= 
	\mathbb{E} \big( \big( \frac{6}{5} \big)^T \big)
	= 
	\infty \cdot \mathbb{P} \big( T = \infty \big) + \sum_{n = 0}^{\infty} \big(\frac{6}{5}\big)^n \cdot \mathbb{P}\big( T = n \big)
	\).
	Firstly, this shows that $\mathbb{P}(T < \infty) = 1$. 
	Secondly, by the inequality
	$\big(\frac{6}{5}\big)^n \geq 1 + \frac{1}{5} n$,
	\[\begin{array}{ll}
	2 
	&= 
	\mathbb{E} \big( \big( \frac{6}{5} \big)^T \big)
	= 
	\sum_{n = 0}^{\infty} \big(\frac{6}{5}\big)^n \cdot \mathbb{P}\big( T = n \big) 
	\geq \sum_{n = 0}^{\infty} \big(1 + \frac{1}{5} \cdot n\big) \cdot \mathbb{P}( T = n)
	\\
	&= 
	1 + \frac{1}{5} \cdot \mathbb{E}(T).  
	\end{array}
	\]
	Therefore, $\mathbb{E}(T) \leq 5$ holds.
	By applying Jensen's Inequality \cite[Theorem 4.7.3]{borokov} we can further improve this bound to $\mathbb{E}(T) \leq 3.8 < 4$.
	%
	
	The above invariants hold for any stopping time $T$. 
	Hence it is possible to use this method to prove that programs are not a.s.~terminating.
	For instance consider the stopping time $T: = \min\left\lbrace n \in \mathbb{N} \mid \nstep{x_1}{n} = \nstep{x_2}{n} \right\rbrace$. 
	So $\nstep{x_1}{T} - \nstep{x_2}{T} = 0$.
	Nonetheless,
	$\mathbb{E} \big( \big( \frac{6}{5} \big)^T \cdot (\nstep{x_1}{T} - \nstep{x_2}{T}) \big) =  \mathbb{E}\big(\nstep{x_1}{0} - \nstep{x_2}{0} \big)$
	holds for $\nstep{x_1}{0} \neq \nstep{x_2}{0}$.
	This can only be, if $\mathbb{P}(T = \infty) > 0$.
	
	
	\section{Related work, conclusions and further directions} \label{sec: relatedwork}
	
	In recent years, there have been significant advances in two related problems: \emph{synthesis of probabilistic invariants}, and \emph{reasonsing about almost-sure termination}.
	(We present a survey of the myriads of methods and techniques for the two problems in Tables~\ref{table:invariants} and~\ref{figure: ast} respectively in \cite[Appendix]{SchreuderO19}.)
	
	That there is a deep link between the two problems is evident in the preceding, especially Sec.~\ref{sec: linear}.
	A telling perspective is provided by the OST (which underpins a recent approach to synthesizing invariants \cite{g}, 
	but) which typically requires the run-time to be bounded in some sense. 
	In fact one of the very few exceptions is our (IUD) precondition which seems new. 
	
	This link was recently investigated by {Hark et al.}~\cite{DBLP:journals/corr/abs-1904-01117}. 
	They show that in order to compute lower bounds on many properties of probabilistic programs, including run-time, it is necessary to satisfy the Optional Stopping Theorem. 
	In particular, any lower-bound approximation of the run-time is necessarily a probabilistic invariant in the sense of this paper. 
	We have constructed just such a lower bound in {Example}~\ref{eg: nested loop}. 
	
	In this paper, we have applied martingale theory to construct invariants.
	A conventional approach to reasoning about probabilistic programs uses weakest precondition, based on predicate transformer semantics \cite{e,wp1,DBLP:conf/cav/ChenHWZ15,DBLP:conf/atva/FengZJZX17,c,DBLP:conf/qest/GretzKM13,DBLP:conf/sas/KatoenMMM10,DBLP:conf/atva/FengZJZX17}. 
	As shown by Hark et al.~\cite{DBLP:journals/corr/abs-1904-01117}, these two methods are essentially equivalent.
	
	A related problem is the computation of moments of program properties.
	This is an important direction as moments capture such fundamental statistics as variance and skewness.
	Examples of program properties that have been so analysed include program variables \cite{forsyte} and run-time \cite{DBLP:conf/tacas/KuraUH19}.
	In the latter, Kura et al.~apply their algorithms to estimate the probability that a program will run exceptionally long. 
	Such quantities could be inferred using our geometric persistence method in Sec.~\ref{sec: geometric persistence}. 

	Barthe et al.~\cite{g} were the first to 
	apply the OST to automated program analysis.
	Given a ``seed'' polynomial in the program variables, Doob's Decomposition guarantees the existence of an associated martingale.
	They then use the (PDB) precondition of (Optional Stopping) Theorem~\ref{thm: OST} to ensure that the expected value of the polynomial upon termination equals that at the start of the computation.
	Our work in Sec.~\ref{sec: geometric persistence} follows in their footsteps, but we use the (IUD) precondition instead, which has the key advantage of not requiring (positive) almost-sure termination.
	Barthe et al.~consider a restricted class of probabilistic imperative programs (in which the loop body contains only assignments, and so, excluding nested loops). 
	Our setting of probabilistic transition systems is more general.
	
	Chakarov et al.~\cite{h} provided another inspiration for our work. 
	They have introduced 
	proof rules 
	for establishing almost-sure persistence 
	and recurrence 
	properties of 
	PTS.
	Their method can automatically infer supermartingales in the form of polynomial over the program variables, by applying SOS optimisation technique.
	They do not, however, consider probabilistic invariants \emph{qua} optional stopping martingales, which we do.

	
	
	
	\subsection*{Conclusion} \label{sec: conclusion}
	
	In this paper we have approached the problem of automatically synthesizing linear or polynomial invariants for probabilistic transition systems. In both cases we guarantee that the expected value upon termination is the same as the value at the start of the computation.
	
	For linear programs we have proven that this problem reduces to finding specific invariants of non-probabilistic programs.
	
	For general transition systems we have presented a methods to (automatically) synthesize polynomial invariants.
	These invariants are martingale expressions over the program variables.
	We have constructed a sum-of-squares constraint systems such that any feasible solutions leads to at least one probabilistic invariant.
	Concretely, this feasibility problem states sufficient conditions for the existence of geometric persistence properties. Based on the persistence properties we have applied the Optional Stopping Theorem \ref{thm: OST} in order to prove that the invariants are still valid after the program has terminated.
	
	We have implemented this method in MATLAB using the YALMIP \cite{yalmip2} and SeDuMi \cite{sedumi} optimisation solvers. Invariants could be found 
	within seconds on a laptop with an Intel(R) Core(TM) i7-6500 CPU @ 2.50 GHz and 8 GB RAM.
	
	\bibliographystyle{splncs04}
	\bibliography{references}
	
	
	%
	
	
	\clearpage
	
	
	\appendix
	
	
	\section{Tables of related work}
	
	\begin{table}[]
		\centering
		\begin{tabular}{|c|c|c|c|c|c|c|c|c|c|}
			\hline{}
			\multirow{2}{*}{} &
			\multirow{2}{*}{demonic?}
			& 
			{continuous} & 
			{type of} & nested & \multicolumn{3}{c|}{OST} & \multirow{2}{*}{tool} & \multirow{2}{*}{template}  
			\\ \cline{6-8}
			& 
			& distributions & invariant & loops & PDB & IUD & other\tablefootnote{Other preconditions of the OST} & &      
			\\ \hline
			Us
			& no & yes & linear & yes & yes & no & no & no & no\\ \hline
			Us
			& no & yes & polynomial & yes & no & yes & no & yes & degree
			\\ \hline
			\multirow{2}{*}{\cite{forsyte}}
			& \multirow{2}{*}{no} & \multirow{2}{*}{yes} & polynomial & \multirow{2}{*}{no} & \multirow{2}{*}{no} & \multirow{2}{*}{no} & \multirow{2}{*}{no} & \multirow{2}{*}{yes} & \multirow{2}{*}{no}
			\\
			& & & 
			and more & & & & & &  
			\\ \hline 
			\multirow{2}{*}{\cite{DBLP:journals/corr/abs-1902-04373}} & \multirow{2}{*}{yes} & No probabilistic & \multirow{2}{*}{polynomial} & \multirow{2}{*}{yes} & \multirow{2}{*}{no} & \multirow{2}{*}{no} & \multirow{2}{*}{no} & \multirow{2}{*}{yes} & \multirow{2}{*}{yes}
			\\ 
			& & nondeterminism & & & & & & &   
			\\ \hline
			\cite{DBLP:conf/atva/FengZJZX17}
			& no & yes & polynomial & yes & yes\tablefootnote{
				Although these papers do not express invariants as martingales nor make explicit use of the OST, they use preconditions that are essentially versions of the OST.
			}
			& no & yes\footnotemark[7] & yes & degree            
			\\ \hline
			\cite{DBLP:conf/cav/ChenHWZ15}
			& no & no & polynomial & no & yes\footnotemark[7] & no & yes\footnotemark[7] & yes & yes                   
			\\ \hline
			\cite{DBLP:conf/qest/GretzKM13}
			& yes & no & polynomial & no & yes\footnotemark[7] & no & yes\footnotemark[7] & yes & yes
			\\ \hline
			\cite{DBLP:conf/sas/KatoenMMM10}
			& yes & no & linear & yes & no & no & no & no & no
			\\ \hline
			\cite{g}
			& no & no & polynomial & no & yes & no & no & no & yes
			\\ \hline
			\multirow{2}{*}{\cite{DBLP:journals/corr/abs-1904-01117}}
			& \multirow{2}{*}{no} & \multirow{2}{*}{no} & positive & \multirow{2}{*}{yes} & \multirow{2}{*}{yes} & \multirow{2}{*}{no} & \multirow{2}{*}{yes} & \multirow{2}{*}{no} & \multirow{2}{*}{no}
			\\
			& & & functions & & & & & &
			\\ \hline
			\cite{b}
			& no & yes & linear & no & no & no & no & yes & yes
			\\ \hline
		\end{tabular}
		~\\[2mm]
		\caption{Methods for generating probabilistic invariants}\label{table:invariants}		
		%
		%
	\end{table}
	
	\begin{table}[]
		\centering
		\begin{tabular}{|c|c|c|c|c|c|c|c|c|c|}
			\hline
			\multirow{2}{*}{} &
			demonic
			& 
			{continuous} & 
			{type of} & nested & \multicolumn{3}{c|}{OST} & \multirow{2}{*}{tool} & \multirow{2}{*}{template}  
			\\ \cline{6-8}
			& nondeterminism & distributions & variant & loops & PDB & IUD & other & &      
			\\ \hline
			\multirow{3}{*}{\cite{e}}
			& \multirow{3}{*}{yes} & \multirow{3}{*}{yes} & positive & \multirow{3}{*}{yes} & \multirow{3}{*}{no} & \multirow{3}{*}{no} & \multirow{3}{*}{no} & \multirow{3}{*}{no} & \multirow{3}{*}{yes}
			\\
			& & & measurable & & & & & &
			\\
			& & & functions & & & & & &
			\\ \hline
			\cite{DBLP:journals/corr/abs-1901-06087}
			& yes & no & linear & yes & yes & no & no & yes & no
			\\ \hline
			\multirow{2}{*}{\cite{DBLP:conf/tacas/KuraUH19}}
			& \multirow{2}{*}{yes} & \multirow{2}{*}{yes} & linear and &\multirow{2}{*}{yes} & \multirow{2}{*}{no} & \multirow{2}{*}{no} & \multirow{2}{*}{no} & \multirow{2}{*}{yes} & \multirow{2}{*}{yes}
			\\			
			& & & polynomial & & & & & & 
			\\ \hline
			\cite{d}
			& yes & yes & linear & yes & no & no & no & yes & no
			\\ \hline
			\cite{a}
			& no & yes & linear & yes & no & no & no & yes & no
			\\ \hline
			\cite{c}
			& yes & yes  & linear & yes & no & no & yes  & no & no 
			\\ \hline
		\end{tabular}
		~\\[2mm]
		\caption{Methods for reasoning about positive a.s.~termination}\label{figure: ast}
	\end{table}

	\section{Supplementary materials for Section~\ref{sec: background}}
	\label{proof idf}
	
	
	Fix a probability space $(\Omega, \calF, \mathbb{P})$.
	Let $M = \{M_n\}_{n \in \mathbb{N}}$ be a sequence of random variables, 
	and $T: \Omega \rightarrow \mathbb{N} \cup \{\infty\}$ be a stopping time. 
	The \emph{process stopped at $T$}, $M^T = \{M_{n \wedge T}\}_{n \in \mathbb{N}}$, is defined by
	\[
	M_{n \wedge T}(\omega)
	:=
	\left\{
	\begin{array}{ll} 
	M_n(\omega) & \hbox{if $n \leq T(\omega)$}\\
	M_{T(\omega)}(\omega) & \hbox{otherwise}
	\end{array} \right.
	\]
	for all $n \in \mathbb{N}$ and $\omega \in \Omega$.		
	
	\begin{theorem}{\rm \cite[Theorem~p.~99]{Williams91}}
		\label{thm: williams}
		\begin{enumerate}[(i)]
			\item If $M = \{M_n\}_{n \in \mathbb{N}}$ is a supermartingale and $T$ is a stopping time, then the stopped process $M^T = \{M_{n \wedge T}\}_{n \in \mathbb{N}}$ is a supermartingale.
			In particular, for all $n \in \mathbb{N}$, $\mathbb{E}(M_{n \wedge T}) \leq \mathbb{E}(M_0)$.
			\item If $M = \{M_n\}_{n \in \mathbb{N}}$ is a martingale and $T$ is a stopping time, then the stopped process $M^T = \{M_{n \wedge T}\}_{n \in \mathbb{N}}$ is a martingale.
			In particular, for all $n \in \mathbb{N}$, $\mathbb{E}(M_{n \wedge T}) = \mathbb{E}(M_0)$.
		\end{enumerate}
	\end{theorem}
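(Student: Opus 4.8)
The plan is to prove both parts uniformly from a single observation: the increments of the stopped process are simply the increments of $M$ switched off once time $T$ has passed. First I would establish the pointwise identity
\[
M_{(n+1) \wedge T} - M_{n \wedge T} = 1_{\{T > n\}} \cdot (M_{n+1} - M_n)
\]
by a two-case analysis. If $T \geq n+1$, then $n \wedge T = n$ and $(n+1) \wedge T = n+1$, so both sides equal $M_{n+1} - M_n$; if $T \leq n$, then $n \wedge T = (n+1) \wedge T = T$, so both sides vanish. This identity reduces the entire theorem to the behaviour of one conditional expectation.

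The crux of the argument is that the multiplier $1_{\{T > n\}} = 1 - 1_{\{T \leq n\}}$ is $\calF_n$-measurable, which holds \emph{precisely because} $T$ is a stopping time (so $\{T \leq n\} \in \calF_n$). This is where the stopping-time hypothesis does all the work. Pulling this $\calF_n$-measurable, non-negative factor out of the conditional expectation gives
\[
\mathbb{E}[M_{(n+1) \wedge T} - M_{n \wedge T} \mid \calF_n] = 1_{\{T > n\}} \cdot \mathbb{E}[M_{n+1} - M_n \mid \calF_n] \quad \text{a.s.}
\]
If $M$ is a supermartingale the right-hand factor $\mathbb{E}[M_{n+1} - M_n \mid \calF_n]$ is $\leq 0$ a.s., and multiplication by the non-negative indicator keeps it $\leq 0$, which is exactly the supermartingale inequality for $M^T$; if $M$ is a martingale that factor is $0$, yielding equality and hence the martingale property. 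A single calculation thus settles both (i) and (ii).

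Before this I would discharge the routine hypotheses that $M^T$ is adapted and integrable. For adaptedness I would write $M_{n \wedge T} = \sum_{k=0}^{n} 1_{\{T = k\}} M_k + 1_{\{T > n\}} M_n$ and note each summand is $\calF_n$-measurable, using $\{T = k\} \in \calF_k \subseteq \calF_n$ together with $M_k$ being $\calF_k$-measurable. Integrability follows from the pointwise bound $|M_{n \wedge T}| \leq \sum_{k=0}^{n} |M_k|$, a finite sum of integrable random variables.

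Finally, for the displayed ``in particular'' consequences I would take expectations of the step relation just derived and iterate: $\mathbb{E}(M_{(n+1) \wedge T}) \leq \mathbb{E}(M_{n \wedge T})$ (respectively $=$) for every $n$, so by induction $\mathbb{E}(M_{n \wedge T}) \leq \mathbb{E}(M_{0 \wedge T}) = \mathbb{E}(M_0)$ (respectively $=$), using $0 \wedge T = 0$. I do not expect a genuine obstacle: the only delicate point is the measurability bookkeeping ensuring $1_{\{T > n\}}$ is $\calF_n$-measurable, and that is exactly what the stopping-time assumption guarantees.
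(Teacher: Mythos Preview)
Your argument is correct and is essentially the classical proof of this result: writing the stopped increment as $1_{\{T>n\}}(M_{n+1}-M_n)$, using that $1_{\{T>n\}}$ is $\calF_n$-measurable to pull it through the conditional expectation, and handling adaptedness/integrability by the finite decomposition you give. There is nothing to correct.

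However, note that the paper does \emph{not} supply its own proof of this theorem: it is simply quoted from \cite[Theorem~p.~99]{Williams91} as a background result in the appendix, with no accompanying argument. So there is no ``paper's proof'' to compare against. What you have written is, in fact, the standard textbook proof (and is the one Williams gives), so if anything your proposal fills in what the paper deliberately left as a citation.
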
	
	
	Let $X_1, X_2, \cdots$ and $X$ be random variables, and suppose they have finite $p$-th moments for some $p > 0$. 
		We say that $X_n$ \emph{converges to $X$ in $L^1$} if $\lim_{n \to \infty} \mathbb{E}(|X_n - X|^p) = 0$.
	Recall that a family $F \subseteq L^1$ is \emph{uniformly integrable} just if
	\begin{align*}
	\inf_{a \in [0, \infty)} \sup_{f \in F} \int_{\{f > a\}} |f| \ d\mathbb{P} = 0
	\end{align*}	
	
	\begin{theorem}{\rm \cite[Theorem 11.7]{klenke}}
		\label{thm: L1 convergence with uniform integrability}
		If $M = \{M_n\}_{n \in \mathbb{N}}$ be a uniformly integrable (sub/super)martingale adapted to a filtration $\calF = \{\calF_n\}_{n \in \mathbb{N}}$, 
		then there exists a $\calF_{\infty}$-measurable integrable random variable $X_{\infty}$ with $\lim_{n \rightarrow \infty} X_n = X_{\infty}$ a.s.~and in $L^1$.
		Furthermore:
		\begin{itemize}
			\item $X_n = \mathbb{E}\left[X_{\infty} \mid \calF_n \right]$ for all $n \in \mathbb{N}$, if $M$ is a martingale.	
			\item $X_n \leq \mathbb{E}\left[X_{\infty} \mid \calF_n \right]$ for all $n \in \mathbb{N}$, if $M$ is a submartingale.	
			\item $X_n \geq \mathbb{E}\left[X_{\infty} \mid \calF_n \right]$ for all $n \in \mathbb{N}$, if $M$ is a supermartingale.	
		\end{itemize}
	\end{theorem}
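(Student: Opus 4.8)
The plan is to prove the result in two stages: first establish almost-sure and then $L^1$ convergence to an integrable limit $X_\infty$, and second verify the closure (in)equalities relating $X_n$ to $\mathbb{E}[X_\infty \mid \calF_n]$. Since a submartingale is the negative of a supermartingale and a martingale is simultaneously both, it suffices to argue one direction carefully and read off the others by negation. First I would observe that uniform integrability forces $L^1$-boundedness: choosing $a$ so large that $\sup_n \int_{\{|M_n| > a\}} |M_n| \, d\mathbb{P} \le 1$ yields $\sup_n \mathbb{E}|M_n| \le a + 1 < \infty$.

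With $L^1$-boundedness in hand, I would invoke Doob's forward convergence theorem, whose engine is the upcrossing inequality: for any rationals $a < b$, the expected number $U_n[a,b]$ of upcrossings of $[a,b]$ by $M_0, \dots, M_n$ satisfies $(b-a)\,\mathbb{E}(U_n[a,b]) \le \mathbb{E}((M_n - a)^-) \le |a| + \sup_n \mathbb{E}|M_n|$. Letting $n \to \infty$ and ranging over all rational pairs shows that almost surely the trajectory cannot oscillate across any fixed interval infinitely often, so $X_\infty := \lim_n M_n$ exists a.s.~in $[-\infty, \infty]$; Fatou's lemma gives $\mathbb{E}|X_\infty| \le \liminf_n \mathbb{E}|M_n| < \infty$, so $X_\infty$ is finite a.s.~and integrable. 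Being an a.s.~limit of the $\calF_n$-measurable variables $M_n$, the limit $X_\infty$ is $\calF_\infty$-measurable after modification on a null set.

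To upgrade to $L^1$ convergence I would apply Vitali's theorem: a uniformly integrable sequence that converges almost surely (hence in probability) converges in $L^1$. Finally, for the closure property, I would fix $n$ and use the (super/sub)martingale relation for $m > n$, e.g.~$M_n \ge \mathbb{E}[M_m \mid \calF_n]$ in the supermartingale case, and let $m \to \infty$. The key point is that conditional expectation is an $L^1$-contraction, so $\|\mathbb{E}[M_m \mid \calF_n] - \mathbb{E}[X_\infty \mid \calF_n]\|_1 \le \|M_m - X_\infty\|_1 \to 0$; hence $\mathbb{E}[M_m \mid \calF_n] \to \mathbb{E}[X_\infty \mid \calF_n]$ in $L^1$ and the inequality passes to the limit to give $M_n \ge \mathbb{E}[X_\infty \mid \calF_n]$, with equality in the martingale case and the reverse inequality in the submartingale case. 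The hard part will be Step two, the almost-sure convergence theorem itself: the upcrossing inequality is the one genuinely nontrivial ingredient, whereas the $L^1$ upgrade and the limit-passage in the closure relations are routine once uniform integrability and the contraction property of conditional expectation are in place.
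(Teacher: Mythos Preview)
Your proposal is a correct and standard proof of this convergence theorem: uniform integrability gives $L^1$-boundedness, Doob's upcrossing argument yields a.s.\ convergence to an integrable $\calF_\infty$-measurable limit, Vitali's theorem upgrades this to $L^1$ convergence, and the closure (in)equalities follow by passing to the limit using the $L^1$-contractivity of conditional expectation. One cosmetic remark: the form of the upcrossing bound you wrote, $(b-a)\,\mathbb{E}(U_n[a,b]) \le \mathbb{E}((M_n-a)^-)$, is the supermartingale version, so be explicit that you are treating the supermartingale case and obtaining the submartingale and martingale cases by negation, as you indicated at the outset.

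As for comparison with the paper: there is nothing to compare. The paper does not prove this theorem; it is stated in the appendix purely as a quotation of \cite[Theorem~11.7]{klenke} and is invoked as a black box in the proof of the (IUD) case of Theorem~\ref{thm: OST}. Your write-up therefore goes well beyond what the paper itself supplies.
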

	
	\begin{lemma}{\rm \cite[Theorem 6.18(iii)]{klenke}}
		\label{lem: uniformly integrability inherits}
		Let $F, G \subseteq L^1$ be families of integrable random variables such that $G$ is uniformly integrable, and for all $f \in F$, there exists $g \in G$ with $|f| \leq |g|$.
		Then $F$ is also uniformly integrable.	
	\end{lemma}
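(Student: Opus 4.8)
The plan is to argue directly from the definition of uniform integrability recalled just above the statement: a family $H \subseteq L^1$ is uniformly integrable precisely when $\inf_{a \geq 0} \sup_{h \in H} \int_{\{|h| > a\}} |h| \, d\mathbb{P} = 0$. For each fixed $h$ the tail integral $\int_{\{|h| > a\}} |h| \, d\mathbb{P}$ is non-increasing in $a$, so the infimum over $a$ coincides with the limit as $a \to \infty$. Consequently, uniform integrability of $G$ furnishes, for every $\varepsilon > 0$, a single threshold $a_0 \geq 0$ with $\sup_{g \in G} \int_{\{|g| > a_0\}} |g| \, d\mathbb{P} < \varepsilon$. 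The whole proof will hinge on transferring this one threshold from $G$ to $F$.

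The heart of the argument is a pointwise comparison. Fixing $\varepsilon > 0$ and the corresponding $a_0$, I would take an arbitrary $f \in F$ and invoke the hypothesis to obtain $g \in G$ with $|f| \leq |g|$. On the event $\{|f| > a_0\}$ we then have $|g| \geq |f| > a_0$, giving the set inclusion $\{|f| > a_0\} \subseteq \{|g| > a_0\}$. Combining the domination $|f| \leq |g|$ of the integrand with this inclusion yields
\[
\int_{\{|f| > a_0\}} |f| \, d\mathbb{P} \;\leq\; \int_{\{|f| > a_0\}} |g| \, d\mathbb{P} \;\leq\; \int_{\{|g| > a_0\}} |g| \, d\mathbb{P} \;<\; \varepsilon,
\]
where the first inequality uses $|f| \leq |g|$, the second uses $\{|f| > a_0\} \subseteq \{|g| > a_0\}$ together with $|g| \geq 0$, and the last is the choice of $a_0$.

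Taking the supremum over $f \in F$ then gives $\sup_{f \in F} \int_{\{|f| > a_0\}} |f| \, d\mathbb{P} \leq \varepsilon$, hence $\inf_{a \geq 0} \sup_{f \in F} \int_{\{|f| > a\}} |f| \, d\mathbb{P} \leq \varepsilon$; since $\varepsilon$ was arbitrary this infimum is $0$, which is exactly uniform integrability of $F$. I do not expect a genuine obstacle, as this is a routine measure-theoretic estimate; the only point meriting care is that the witness $g$ is allowed to depend on $f$, whereas the threshold $a_0$ is extracted from $G$ \emph{before} any $f$ is chosen. Because $a_0$ is thus uniform over $F$, the final supremum bound holds even though each $f \in F$ may be dominated by a different member of $G$.
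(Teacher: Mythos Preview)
Your proof is correct and is the standard direct argument from the definition of uniform integrability. Note, however, that the paper does not supply its own proof of this lemma: it merely quotes the result from Klenke's textbook \cite[Theorem 6.18(iii)]{klenke} and uses it as a black box in the proof of Theorem~\ref{thm: OST}. So there is no ``paper's proof'' to compare against; your argument simply fills in what the paper leaves to the reference, and it does so cleanly.
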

	
	
	
	\subsection*{Proof of Theorem~\ref{thm: OST}: Precondition (IUD)}
	By Theorem \ref{thm: williams}, we have that the process stopped at $T$, $\{M_{n \wedge T}\}_{n \in \mathbb{N}}$, is a (super)martingale. 
	By assumption, $g$ is integrable; it follows that the family $\{g\} \subseteq L^1$ is uniformly integrable.
	Hence the assumptions of Lemma~\ref{lem: uniformly integrability inherits} are satisfied with $F = \{M_{n \wedge T} : n \in \mathbb{N} \}$ and $G = \{g\}$, and so, $\{M_{n \wedge T}\}_{n \in \mathbb{N}}$ is a uniformly integrable (super)martingale.
	By Theorem \ref{thm: L1 convergence with uniform integrability}, there is a random variable to which $M_{n \wedge T}$ converges almost surely and in $L^1$ as $n \rightarrow \infty$. 
	We already know that $M_{n \wedge T}$  converges point-wise to $M_{T}$.
	Hence by the additional $L^1$ convergence we can conclude
	\[
	\lim_{n \rightarrow \infty} \mathbb{E}(M_{n \wedge T}) = \mathbb{E}(M_{T}).
	\]		
	Furthermore, by Theorem~\ref{thm: williams}(ii), if $\{M_{n \wedge T}\}_{n \in \mathbb{N}}$ is a martingale, then for all $n \in \mathbb{N}$, 
	$\mathbb{E}(M_{0}) = \mathbb{E}(M_{n \wedge T})$,
	which gives us $\mathbb{E}(M_{0}) = \mathbb{E}(M_{T})$.
	Similarly, by Theorem~\ref{thm: williams}(i), if $\{M_{n \wedge T}\}_{n \in \mathbb{N}}$ is a supermartingale, then for all $n \in \mathbb{N}$, 
	$\mathbb{E}(M_{0}) \leq \mathbb{E}(M_{n \wedge T})$.
	As limits preserve inequalities, we can conclude
	\[
	\mathbb{E}(M_{0})
	= \lim_{n \rightarrow \infty} \mathbb{E}(M_{0})
	\leq
	\lim_{n \rightarrow \infty} \mathbb{E}(M_{n \wedge T}) = \mathbb{E}(M_{T}).
	\]	
	\qed
	
	\section{Supplementary materials for Section~\ref{sec: program}}
	
	\subsection{Additional example: Hare and Tortoise}
	
	\noindent\begin{minipage}{0.4\textwidth}
		\begin{algorithmic}
			\STATE {$x_1 := 0$} 
			\STATE {$x_2 := 30$} 
			\WHILE {$x_1 \leq x_2$}
			\STATE {$r_1 \sim \mathrm{uniform}(0, 1)$}
			\IF {$r_1 \geq 0.5$}
			\STATE {$r_2 \sim \mathrm{uniform}(0, 10)$}
			\STATE {$x_1 := x_1 + r_2$}
			\ENDIF
			\STATE {$x_2 := x_2 + 1$}
			\ENDWHILE
		\end{algorithmic}
	\end{minipage}%
	\begin{minipage}{0.6\textwidth}
		This example is from \cite{a}. It simulates the race between the hare ($x_1$) and the tortoise ($x_2$). The tortoise has a head start, but the hare can potentially run faster.
		
		Chakarov et al.~\cite{a} proved that this program terminates positively almost surely.
		Hence by Theorem~\ref{thm:correspondence}, if we find an invariant for the corresponding deterministic program, we have an invariant for the original one.
		
		\vfill
	\end{minipage}
	
	\medskip
	
	\noindent The deterministic equivalent reads:
	
	\begin{minipage}{0.45\textwidth}
		\begin{algorithmic}
			\STATE {$x_1 := 0$} 
			\STATE {$x_2 := 30$} 
			\WHILE {$x_1 \leq x_2$}
			\STATE {$r_2 := 5$}
			\STATE {$x_1 := 0.5 \cdot (x_1 + r_2) + 0.5 \cdot x_1$}
			\STATE {$x_2 := x_2 + 1$}
			\ENDWHILE	
		\end{algorithmic}
	\end{minipage}
	\hfill
	\begin{minipage}{0.45\textwidth}
		\begin{flushleft}
			Or~shorter,
		\end{flushleft}
		\begin{algorithmic}
			\STATE {$x_1 := 0$} 
			\STATE {$x_2 := 30$} 
			\WHILE {$x_1 \leq x_2$}
			\STATE {$x_1 := x_1 + 2.5$}
			\STATE {$x_2 := x_2 + 1$}
			\ENDWHILE	
		\end{algorithmic}
	\end{minipage}
	
	\medskip
	
	Hence (using Notation~\ref{notation}) $\{2\nstep{x_1}{n} - 5\nstep{x_2}{n}\}$, $\{\nstep{x_1}{n} - 2.5n\}$ and $\{\nstep{x_2}{n} + n\}$ are linear invariants for both programs.
	Let
	$T: \Omega \rightarrow \mathbb{N} \cup \{\infty\}$ be the exit time of the loop, i.e.~$\mathbb{E}\big(\nstep{x_1}{T}\big) \geq \mathbb{E}\big(\nstep{x_2}{T}\big)$.
	Since
	\(
	-150 = \mathbb{E}(2\nstep{x_1}{0} - 5 \nstep{x_2}{0}) = \mathbb{E}\big(2\nstep{x_1}{T} - 5 \nstep{x_2}{T} \big) \geq \mathbb{E}\big(2\nstep{x_2}{T} - 5 \nstep{x_2}{T} \big) = -3 \mathbb{E} \big(\nstep{x_2}{T}\big),
	\)
	we have $\mathbb{E} \big(\nstep{x_1}{T}\big) \geq \mathbb{E} \big(\nstep{x_2}{T}\big) \geq 50$.
	Now, $0 = \mathbb{E} \big(\nstep{x_1}{0} - 2.5 \cdot 0 \big) = \mathbb{E} \big(\nstep{x_1}{T} - 2.5 \cdot T \big)$; it follows that $\mathbb{E}\big(\nstep{x_1}{T}\big) = 2.5 \cdot \mathbb{E} \big( T \big)$.
	Similarly, because $30 = \mathbb{E} \big(\nstep{x_2}{0} -  0 \big) = \mathbb{E} \big(\nstep{x_2}{T} - T \big)$, we have $\mathbb{E}\big(\nstep{x_2}{T}\big) = 30 + \mathbb{E} \big( T \big)$. 
	Thus, $\mathbb{E} \big(T \big) \geq 20$.		

	\subsection{Additional proofs}
	
	\begin{lemma}
		\label{lemma: pts,measurable}
		Let $\Pi$ be a PTS and $\{X^k, L^k\}$ be the operational semantics of $\Pi$. 
		Let $F_k = \sigma(X_0, \widetilde{R^1}, \widetilde{R^2}, \cdots, \widetilde{R^k})$ denote our standard filtration.
		Then $(X_k,L_k)$ is $\mathbb{B}(\mathbb{R}^n) \otimes \mathcal{P}(L)$\footnote{
			$\mathcal{P}$ denotes the powerset.
		}-$\calF_k$-measurable.
	\end{lemma}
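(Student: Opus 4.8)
The plan is to proceed by induction on $k$, reducing product measurability to coordinatewise measurability and exploiting that $L$ is finite with the full powerset $\sigma$-algebra. Since the codomain carries the product $\sigma$-algebra $\mathbb{B}(\mathbb{R}^n) \otimes \mathcal{P}(L)$, the map $(X^k, L^k)$ is measurable precisely when each coordinate is: $X^k$ must be $\calF_k$-$\mathbb{B}(\mathbb{R}^n)$-measurable, and $L^k$ must be $\calF_k$-$\mathcal{P}(L)$-measurable. Because $\mathcal{P}(L)$ is the powerset of a finite set, the latter reduces to checking $\{L^k = l\} \in \calF_k$ for every $l \in L$. I would carry the two claims---measurability of $X^k$ and of each $\{L^k = l\}$---through the induction together.

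For the base case $k = 0$, Definition~\ref{def: operational semantics PTS} gives $X^0 = X_0$, which is $\sigma(X_0) \subseteq \calF_0$-measurable, while $L^0 \equiv l_0$ is constant, so each $\{L^0 = l\}$ is $\Omega$ or $\emptyset$ and hence in $\calF_0$. For the inductive step, assume $(X^k, L^k)$ is $\calF_k$-measurable. The first observation is that each guard $\phi_\tau$ is a boolean combination of polynomial inequalities, so its satisfaction set $\{x \in \mathbb{R}^n : x \models \phi_\tau\}$ is Borel (indeed semialgebraic); combined with the inductive hypothesis this yields $\{X^k \models \phi_\tau\} \in \calF_k$ and $\{L^k = \sloc{\tau}\} \in \calF_k$. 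For each transition $\tau$ and branch $i \in \{1, \ldots, j_\tau\}$ I would then set
\[
A_{\tau, i} := \{L^k = \sloc{\tau}\} \cap \{X^k \models \phi_\tau\} \cap C_{\tau, i},
\]
where $C_{\tau, i}$ is the event that branch $i$ is selected at step $k+1$. Since the discrete update choice is, by construction, a component of $\widetilde{R}^{k+1}$, we have $C_{\tau, i} \in \calF_{k+1}$ and therefore $A_{\tau, i} \in \calF_{k+1}$. By the non-demonic (determinacy) assumption exactly one transition is enabled at each configuration, so the family $\{A_{\tau, i}\}$ partitions $\Omega$, and on $A_{\tau, i}$ Definition~\ref{def: operational semantics PTS} gives $X^{k+1} = F_{\tau, i}(X^k, R^{k+1})$ and $L^{k+1} = \dloc{\tau}$.

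From this partition the result follows. Writing
\[
X^{k+1} = \sum_{\tau, i} \mathbf{1}_{A_{\tau, i}} \cdot F_{\tau, i}(X^k, R^{k+1})
\]
expresses $X^{k+1}$ as a finite sum of products of $\calF_{k+1}$-measurable indicators with $\calF_{k+1}$-measurable functions---here $X^k$ is $\calF_k \subseteq \calF_{k+1}$-measurable, $R^{k+1}$ is $\calF_{k+1}$-measurable, and each $F_{\tau, i}$ is a polynomial hence Borel---so $X^{k+1}$ is $\calF_{k+1}$-measurable. Likewise, for each $l \in L$,
\[
\{L^{k+1} = l\} = \bigcup_{\tau \,:\, \dloc{\tau} = l} \; \bigcup_{i = 1}^{j_\tau} A_{\tau, i} \in \calF_{k+1},
\]
closing the induction. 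I expect the main obstacle to be the inductive step, and within it the verification that transition and branch selection is $\calF_{k+1}$-measurable: this is exactly where the semialgebraic form of the guards (making the satisfaction sets Borel) and the determinacy assumption (making the selector well-defined and the events $A_{\tau,i}$ a genuine partition of $\Omega$) are both needed.
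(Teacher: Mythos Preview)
Your proof is correct and follows essentially the same inductive approach as the paper: base case from $\calF_0=\sigma(X_0)$ and constancy of $L^0$, inductive step via measurability of the (Borel) guard events, non-demonic determinacy, the discrete branch choice residing in $\widetilde{R}^{k+1}$, and continuity of the polynomial updates $F_{\tau,i}$. The only cosmetic difference is that you organise the step through an explicit measurable partition $\{A_{\tau,i}\}$ and a sum-of-indicators representation of $X^{k+1}$, whereas the paper phrases the same argument via a measurable selection function for the enabled transition.
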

	
	\begin{proof} 
		We prove the claim by induction.
		
		For the base case: $k$ = 0:
		As $\calF_0 = \sigma(X_0)$,  $X_0$ is $\mathbb{B}(\mathbb{R}^n)$-$F_0$-measurable.
		As a constant function, $L_0$ is trivially $\mathcal{P}(L)$-$\calF_0$-measurable.
		
		For the inductive case, we have, by the induction hypothesis:
		\begin{center}
			\begin{tabular}{lll}
				$g_k$: &$\Omega$ &$\rightarrow \mathbb{R}^n \times L$\\
				&$\omega$ &$\mapsto (X_k(\omega),L_k(\omega))$
			\end{tabular}
		\end{center}
		is $\mathbb{B}(\mathbb{R}^n) \otimes \mathcal{P}(L)$-$\calF_k$-measurable.\\
		$\Pi$ is non-demonic. Thus $\tau_{k+1}$ is uniquely determined by $X^k, L^k$ based on polynomial guards on $\mathbb{R}^n$.
		This can be represented by a $\mathcal{P}(T)$-$\mathbb{B}(\mathbb{R}^n) \otimes \mathcal{P}(L)$-mea\-sur\-able selection function $\psi$.
		Therefore, $\tau_{k+1} = \psi \circ g_k$ is 
		$\mathcal{P}(T)$-$\calF_k$-measurable.
		$L^{k+1}$ is entirely defined by $\tau_{k+1}$.
		Hence $L^{k+1}$ is $\mathcal{P}(L)$-$\calF_k$-measurable.
		
		By Definition \ref{def: pts},
		$X^{k+1} = f_{\tau_{k+1}}(X^k, \widetilde{R}^{k+1}) = F_{\tau_{k+1},i}(X^k, \widetilde{R}^{k+1})$
		for a randomly chosen $i \in \{1,2,\cdots,j_{\tau_{k+1}}\}$ with probability $p_{\tau_{k+1},i} > 0$.
		The choice of $i$ 
		is contained in  $\widetilde{R}^{k+1}$ and hence $\calF_{k}$-measurable.
		$F_{\tau_{k+1},i}$ is continuous.
		As a consequence, 
		$X^{k+1} = F_{\tau_{k+1},i}(X^k, \widetilde{R}^{k+1})$ is
		$\mathbb{B}(\mathbb{R}^{n})$-$\calF_{k+1}$-measurable.
		$L^{\psi}_{k+1} = l_{\tau_{k+1},i}$ does only on $\tau_{k+1}$ and the choice of $i$.
		So $L_{k+1}$ is $\mathcal{P}(L)$-$\calF_{k+1}$-measurable and is therefore $\mathcal{P}(L)$-$\calF_{k+1}$-measurable. 
		Hence $(X^{k+1}, L^{k+1})$ is $\mathbb{B}(\mathbb{R}^{n}) \otimes \mathcal{P}(L)$-$\calF_k$-measurable.	
		\qed		
	\end{proof}
	
	%
	
	
	\section{Supplementary materials for Section~\ref{sec:invariant}}	
	
	\lemmatilde
	
	\begin{proof} 
		Let $k \in \mathbb{N}$.
		\[\begin{array}{cl}
		&\mathbb{E}[h(X^{k+1}, L^{k+1}, k+1) \mid \calF_k]
		\\
		= & 
		\mathbb{E}\Big[\Big( \sum_{i = 1}^p [(X^k \models \phi_i) \wedge (L^k = \sloc{\tau_i})] \Big) \cdot h(X^{k+1}, L^{k+1}, k+1) \mid \calF_k \Big]
		\\
		= &
		\mathbb{E}\Big[\Big( \sum_{i = 1}^p [(X^k \models \phi_i) \wedge (L^k = \sloc{\tau_i})] \Big) \cdot h(f_{\tau_i}(X^{k}, R^{k+1}), \dloc{\tau_{i}}, k+1) \mid \calF_k \Big] 
		\\
		= &
		\sum_{i = 1}^p [(X^k \models \phi_i) \wedge (L^k = \sloc{\tau_i})] \Big) \cdot
		\mathbb{E}\Big[  h(f_{\tau_i}(X^{k}, R^{k+1}), \dloc{\tau_{i}}, k+1) \mid \calF_k \Big] 
		\\
		&\hspace*{12pt}
		X^k, L^k \mathrm{\ are \ } \calF_{k}\mathrm{-measurable}
		\end{array}\]
		%
		
		Moreover, as $h(-, \dloc{\tau_i}, -) = P_{i}$ is a polynomial, we have
		\[\begin{array}{ll}
		& \mathbb{E}\Big[  h(f_{\tau_i}(X^{k}, R^{k+1}), \dloc{\tau_{i}}, k+1) \mid \calF_k \Big]
		= 
		\mathbb{E}\Big[  P_i(f_{\tau_i}(X^{k}, R^{k+1}),  k+1) \mid \calF_k \Big]
		\\
		= &
		\mathbb{E}\Big[ \Big(
		\sum_{q = 1}^{j_{\tau_i}} [f_{\tau_i} = F_{\tau_i, q}] \Big) \cdot P_i(f_{\tau_i}(X^k, R^{k+1}), k+1) \mid \calF_k \Big].
		\end{array}
		\]
		The functions
		$P_i$ and $F_{\tau_i, q}$ are polynomials in $X^k$, $R^{k+1}$ and $k$.
		$X^k$ are $\calF_k$-measurable. The random choice $[f_{\tau_i} = F_{\tau_i, q}]$ and $R^{k+1}$ are $\calF_k$-independent. 
		The conditional expectation is linear. Hence we can split the conditional expectation above in a linear combination of conditional expectations of $[f_{\tau_i} = F_{\tau_i, q}]$ multiplied with monomials over $X^k$, $R^{k+1}$ and $k$. Monomials over $X^k$ are $\calF_k$-measurable and $k$ is a constant, so we can pull them out of the conditional expectations. 
		The rest is in\-de\-pen\-dent of $\calF_k$. Hence the remaining conditional expectation is an ordinary expectation. By mutual independence we can split the expectations over $R^{k+1}$ and $\mathbb{E} [f_{\tau_i} = F_{\tau_i, q}]$.
		Note that $\mathbb{E} [f_{\tau_i} = F_{\tau_i, q}] = p_{\tau_i, q}$.
		Hence
		\[\begin{array}{ll}
		&\mathbb{E}\Big[  h(f_{\tau_i}(X^{k}, R^{k+1}), \dloc{\tau_{i}}, k+1) \mid \calF_k \Big]
		\\
		= &
		\sum_{q = 1}^{j_{\tau_i}}
		p_{\tau_i, q} \cdot
		\mathbb{E}_{R^{k+1}} \left( P_i(F_{\tau_i, q}(X^k, R^{k+1}),k+1) \right)
		=
		\mathrm{pre}\mathbb{E}(h, \tau_i)(X^k, k+1)
		\end{array}\]
		which proves the claim.
		
		As for part (ii), $\mathbb{E}[h(X^{k+1}, L^{k+1}, k+1) \mid \calF_k] = \mathrm{pre}\mathbb{E}(P)(X^k, L^k, k+1) \leq h(X^k, L^k, k)$ by part 1.
		\qed
	\end{proof}

	\section{Supplementary materials for Section~\ref{sec: linear}}
	
	\linearOST
	
	
	\begin{proof}
		(i): Let $x \in \mathbb{R}^n$ and $k \in \mathbb{N}$. We have:
		\[		\begin{array}{lll}
		\mathrm{pre}\mathbb{E}(h, \tau)(x, k)
		&= 
		\sum_{i = 1}^{j_{\tau}} p_{\tau, i} \cdot 
		\mathbb{E}_R(h(F_{\tau, i}(x, R), \dloc{\tau}, k))
		\\
		&= 
		\sum_{i = 1}^{j_{\tau}} p_{\tau, i} \cdot 
		\mathbb{E}_R(
		P_{\dloc{\tau}} \left(
		F_{\tau, i}(x, R), k \right))
		\\
		&=  
		P_{\dloc{\tau}} \left(
		\sum_{i = 1}^{j_{\tau}} p_{\tau, i} \cdot
		\mathbb{E}_R( F_{\tau, i}(x, R)),
		\sum_{i = 1}^{j_{\tau}} p_{\tau, i} \cdot k \right)
		\quad 
		P_{\dloc{\tau}} \mathrm{\ linear}	
		\\ 
		&=  
		P_{\dloc{\tau}} \left(
		\sum_{i = 1}^{j_{\tau}} p_{\tau, i} \cdot
		F_{\tau, i}(x, \mu), k \right)
		\hspace*{80pt} F_{\tau, i} \mathrm{\ linear}	
		\\
		&=  
		P_{\dloc{\tau}} \left(
		f_{\tau}^{\mathsf{det}}(x), k \right)
		= h \left(
		f_{\tau}^{\mathsf{det}}(x), \dloc{\tau}, k \right)
		= \mathrm{pre}\mathbb{E}^{\mathsf{det}}(h, \tau)(x, k)
		\end{array}\]
		
		(ii):
		\begin{align*}
		\begin{split}
		&\mathbb{E}[| h(X^{k+1}, L^{k+1}, k+1) - h(X^{k}, L^k, k) | \mid \calF_{k}]
		\\
		= \ &
		\mathbb{E} 
		\left[
		\left| 
		\sum_{i = 1}^p 
		[(X^k \models \phi_i) \wedge (L^k = \sloc{ \tau_i})]
		\left(
		P_{\dloc{\tau_{i}}}(f_{\tau_i}(X^k, R^{k+1}), k+1) 
		\right. \right. \right.
		\\
		&\left. \left. \left.
		- \ h(X^{k}, L^k, k) \right) \right|
		\mid \calF_{k}\right]	
		\\
		=&
		\sum_{i = 1}^p 
		[(X^k \models \phi_i) \wedge (L^k = \sloc{ \tau_i})]
		\cdot
		\mathbb{E}[
		|   P_{\dloc{\tau_{i}}} (f_{\tau_i}(X^k, R^{k+1}), k+1) 
		\\
		&- h(X^{k}, L^k, k)  |
		\mid \calF_{k}] 	
		\\
		\leq
		&\sum_{i = 1}^p 
		[(X^k \models \phi_i) \wedge (L^k = \sloc{\tau_i})]
		\mathbb{E}[
		|   P_{\dloc{\tau_{i}}} (f_{\tau_i}(X^k, \mu), k+1)
		- h(X^{k}, L^k, k)  |
		\mid \calF_{k}] 
		\\
		&+
		\sum_{i = 1}^p 
		[(X^k \models \phi_i) \wedge (L^k = \sloc{ \tau_i})]
		\cdot
		\mathbb{E}[
		|   P_{\dloc{\tau_{i}}}(f_{\tau_i}(X^k, R^{k+1}), k+1) 
		\\
		&- P_{\dloc{\tau_{i}}, k} (f_{\tau_i}(X^k, \mu))  |
		\mid \calF_{k}]
		\end{split}
		\end{align*}	
		Considering the first summand we obtain for $\tau_i \in T$ fixed,
		\begin{align*}
		\begin{split}
		&\mathbb{E}[
		|   P_{\dloc{\tau_{i}}} (f_{\tau_i}(X^k, \mu), k+1) - h(X^{k}, L^k, k)  |
		\mid \calF_{k}]
		\\
		=& 
		\mathbb{E} \left[
		\left| \sum_{q = 1}^{j_{\tau_i}} [f_{\tau_i} = F_{\tau_i, q}] \left( P_{\dloc{\tau_{i}}} (F_{\tau_i, q}(X^k, \mu), k+1) - h(X^{k}, L^k, k) \right)  \right|
		\mid \calF_{k} \right]
		\\
		=&
		\sum_{q = 1}^{j_{\tau_i}}
		\mathbb{E}[
		|   [f_{\tau_i} = F_{\tau_i, q}]   |] \cdot
		\left| P_{\dloc{\tau_{i}}} (F_{\tau_i, q}(X^k, \mu), k+1) - h(X^{k}, L^k, k) \right|
		\\
		=&
		\sum_{q = 1}^{j_{\tau_i}}
		p_{\tau_i, q} \cdot
		\left| P_{\dloc{\tau_{i}}} (F_{\tau_i, q}(X^k, \mu), k+1) - h(X^{k}, L^k, k) \right|
		\\
		=&
		\left| \sum_{q = 1}^{j_{\tau_i}}
		p_{\tau_i, q} \cdot P_{\dloc{\tau_{i}}} (F_{\tau_i, q}(X^k, \mu), k+1) - h(X^{k}, L^k, k) \right|
		\\
		=&
		\left| 
		\mathrm{pre}\mathbb{E}^{\mathsf{det}}(h, \tau_{i}) (X^k, k+1) - h(X^{k}, L^k, k) \right|.
		\end{split}
		\end{align*}
		Therefore, we can conclude
		\begin{align*}
		\begin{split}
		&\sum_{i = 1}^p 
		[(X^k \models \phi_i) \wedge (L^k = \sloc{ \tau_i})]
		\ 
		\mathbb{E}[
		|   P_{\dloc{\tau_{i}}} (f_{\tau_i}(X^k, \mu), k+1) - h(X^{k}, L^k, k)  |
		\mid \calF_{k}]
		\\
		=&
		\sum_{i = 1}^p 
		[(X^k \models \phi_i) \wedge (L^k = \sloc{ \tau_i})]
		\cdot
		\left| 
		\mathrm{pre}\mathbb{E}^{\mathsf{det}}(h, \tau_{i}) (X^k, k+1) - h(X^{k}, L^k, k) \right|
		\\
		=&
		\left| 
		\sum_{i = 1}^p 
		[(X^k \models \phi_i) \wedge (L^k = \sloc{ \tau_i})]
		\cdot
		\mathrm{pre}\mathbb{E}^{\mathsf{det}}(h, \tau_{i}) (X^k, k+1) - h(X^{k}, L^k, k) \right|
		\\
		=&
		\left|
		\mathrm{pre}\mathbb{E}^{\mathsf{det}}(h) (X^k, L^k, k+1) - h(X^{k}, L^k, k) \right|.
		\end{split}
		\end{align*}
		The second term is in fact bounded by a constant.
		For $\tau_i \in T$ fixed, we have
		\begin{align*}
		\begin{split}
		&\mathbb{E}[
		|   P_{\dloc{\tau_{i}}}(f_{\tau_i}(X^k, R^{k+1}), k+1) - P_{\dloc{\tau_{i}}} (f_{\tau_i}(X^k, \mu), k+1)  |
		\mid \calF_{k}]
		\\
		=&
		\mathbb{E} \left[
		\left| 
		\sum_{q = 1}^{j_{\tau_i}} [f_{\tau_i} = F_{\tau_i, q}] \cdot
		\left(  P_{\dloc{\tau_{i}}}(F_{\tau_i, q}(X^k, R^{k+1}), k+1) 
		\right. \right. \right.
		\\
		%
		&- P_{\dloc{\tau_{i}}, k} (F_{\tau_i, q}(X^k, \mu), k+1) ) | 
		\mid \calF_{k} ] 
		\\
		& \mathrm{The \ functions \ } P_{\dloc{ \tau_{i}}}, F_{\tau_i, q} \mathrm{ \ are \ linear. \ Hence \ } P_{\dloc{\tau_{i}}} \circ F_{\tau_i, q} 
		\mathrm{ \ is \ linear \ as \ well.}  
		\\
		&\mathrm{ Thus, \ the \ term \ inside \ the \ conditional \ expectation }
		\mathrm{ \ does \ not \ depend \ on \  X^k. }
		\\
		=&
		\mathbb{E} \left[
		\left| 
		\sum_{q = 1}^{j_{\tau_i}} [f_{\tau_i} = F_{\tau_i, q}] \cdot
		\left(  P_{\dloc{ \tau_{i}}}(F_{\tau_i, q}(X^k, R^{k+1} - \mu), k+1) 
		\right) \right| 		\right] \in \mathbb{R}.
		\end{split}
		\end{align*}
		As we only have finitely many transitions, this concludes the proof.
		
		\qed
	\end{proof}
	
	\section{Supplementary materials for Section~\ref{sec: geometric persistence} }
	
	In this Section, we fix a PTS $\Pi = \langle W, X, R, X_0, L, l_0, l_F, T \rangle$,
	and assume functions $P,  V: \mathbb{R}^n \times L \rightarrow \mathbb{R}$ such that for all $l \in L$, $V(-, l)$ and $P(-, l)$ are polynomials satisfying $V \geq P^2$.
	
	\begin{lemma}
		\label{lemma: diagonal terms} 
		Let $\tau \in T$ be a transition.
		Then 
		$\mathrm{pre}\mathbb{E}(P,\tau)(z)^2 \leq j_{\tau} \cdot \mathrm{pre}\mathbb{E}(V,\tau)(z)$ for all $z \in \mathbb{R}^n$,
		where $j_{\tau} \in \mathbb{N}$ depends only on the transition $\tau$.
	\end{lemma}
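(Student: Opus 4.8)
The plan is to exploit two ingredients: the hypothesis $V \geq P^2$, which holds pointwise on $\mathbb{R}^n \times L$, and the convexity of the squaring map, which lets me pull a square inside an average. Recall from Definition~\ref{def: pre-expectation} that $\mathrm{pre}\mathbb{E}(P, \tau)(z) = \sum_{i=1}^{j_\tau} p_{\tau, i}\, \mathbb{E}_R\big(P(F_{\tau,i}(z, R), \dloc{\tau})\big)$, a weighted sum of $j_\tau$ terms. I would bound the square of this sum by the discrete Cauchy--Schwarz inequality $\big(\sum_{i=1}^{j_\tau} a_i\big)^2 \leq j_\tau \sum_{i=1}^{j_\tau} a_i^2$, applied with $a_i := p_{\tau,i}\,\mathbb{E}_R(P(F_{\tau,i}(z,R), \dloc{\tau}))$. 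This is precisely where the factor $j_\tau$ in the statement originates.

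Next, for each fixed $i$ I would treat the inner $R$-expectation. By Jensen's inequality for the convex map $t \mapsto t^2$ we have $\mathbb{E}_R(P(F_{\tau,i}(z,R), \dloc{\tau}))^2 \leq \mathbb{E}_R(P(F_{\tau,i}(z,R), \dloc{\tau})^2)$, and the pointwise hypothesis $V \geq P^2$ gives $P(F_{\tau,i}(z,R), \dloc{\tau})^2 \leq V(F_{\tau,i}(z,R), \dloc{\tau})$, so monotonicity of $\mathbb{E}_R$ yields $\mathbb{E}_R(P(\cdots))^2 \leq \mathbb{E}_R(V(F_{\tau,i}(z,R), \dloc{\tau}))$. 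Combining this with $p_{\tau,i}^2 \leq p_{\tau,i}$ (valid since each $p_{\tau,i} \in [0,1]$), every summand $a_i^2 = p_{\tau,i}^2\, \mathbb{E}_R(P(\cdots))^2$ is dominated by $p_{\tau,i}\,\mathbb{E}_R(V(F_{\tau,i}(z,R), \dloc{\tau}))$. Summing over $i$ then reassembles $\mathrm{pre}\mathbb{E}(V,\tau)(z)$, delivering $\mathrm{pre}\mathbb{E}(P,\tau)(z)^2 \leq j_\tau\, \mathrm{pre}\mathbb{E}(V,\tau)(z)$.

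The computation is routine once this structure is in place; the only point that demands care is the bookkeeping across the two distinct layers of averaging -- the discrete choice among the $j_\tau$ update branches $F_{\tau,i}$ on one hand, and the continuous expectation over the random samples $R$ on the other. I would keep them separate (Cauchy--Schwarz on the discrete layer, Jensen on the continuous layer) rather than merging them, since this keeps the provenance of the constant $j_\tau$ transparent and avoids any argument about the joint law of the branch index and the samples. I note in passing that merging both layers into a single expectation and applying Jensen once would even yield the sharper bound with constant $1$ in place of $j_\tau$; the weaker stated bound is all that the subsequent integrability argument requires.
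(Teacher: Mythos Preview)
Your proposal is correct and follows essentially the same route as the paper: the paper expands the square into a double sum and applies $2ab \leq a^2 + b^2$ to collapse it to $j_\tau \sum_i p_{\tau,i}^2\,\mathbb{E}_R(P(\cdots))^2$, which is exactly your discrete Cauchy--Schwarz step in disguise, and then proceeds with $p_{\tau,i}^2 \leq p_{\tau,i}$, Jensen on $\mathbb{E}_R$, and $P^2 \leq V$, just as you do. Your closing remark that a single Jensen step on the merged expectation would yield constant $1$ is correct and a nice observation, though the paper does not make it.
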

	
	\begin{proof} 
		Let $z \in \mathbb{R}^n$.
		\[\begin{array}{ll}
		&\mathrm{pre}\mathbb{E}(P,\tau)(z)^2\\
		= & 
		\left( \sum_{i = 1}^{j_{\tau}} p_{\tau,i} \cdot \mathbb{E}_R(P(F_{\tau,i}(z,R), \dloc{\tau})) \right)^2
		\\
		= & 
		\sum_{i,k = 1}^{j_{\tau}} p_{\tau,i} \cdot p_{\tau,k} \cdot \mathbb{E}_R(P(F_{\tau,i}(z,R), \dloc{\tau}) \cdot \mathbb{E}_R(P(F_{\tau,k}(z,R), \dloc{\tau}) 
		\\ 
		\leq & 
		\sum_{i,k = 1}^{j_{\tau}} 
		\frac{1}{2} \cdot 
		\left( p_{\tau,i}^2 \cdot \mathbb{E}_R(P(F_{\tau,i}(z,R), \dloc{\tau})^2 +
		p_{\tau,k}^2 \cdot \mathbb{E}_R(P(F_{\tau,k}(z,R), \dloc{\tau})^2 \right)
		\\ 
		= &
		\sum_{i, k = 1}^{j_{\tau}} 
		\frac{1}{2} \cdot 
		p_{\tau,i}^2 \cdot \mathbb{E}_R(P(F_{\tau,i}(z,R), \dloc{\tau}))^2 
		+
		\sum_{i, k = 1}^{j_{\tau}}  
		\frac{1}{2} \cdot
		p_{\tau,k}^2 \cdot \mathbb{E}_R(P(F_{\tau,k}(z,R), \dloc{\tau}))^2 
		\\ 
		= & 
		j_{\tau} \cdot  \sum_{i = 1}^{j_{\tau}} p_{\tau,i}^2 \cdot \mathbb{E}_R(P(F_{\tau,i}(z,R), \dloc{\tau}))^2
		\leq 
		j_{\tau} \cdot  \sum_{i = 1}^{j_{\tau}} p_{\tau,i} \cdot \mathbb{E}_R(P(F_{\tau,i}(z,R), \dloc{\tau}))^2 
		\\ 
		\leq & 
		j_{\tau} \cdot  \sum_{i = 1}^{j_{\tau}} p_{\tau,i} \cdot \mathbb{E}_R(P(F_{\tau,i}(z,R), \dloc{\tau})^2)
		\leq 
		j_{\tau} \cdot \mathrm{pre}\mathbb{E}(V, \tau)(z)
		\end{array}
		\]
		As all moments of $R$ exists, $\mathbb{E}_{R}(P(F_{\tau,i}(z,R), \dloc{\tau})^2) < \infty$.
		Therefore, the second-to-last inequality is  Jensen's Inequality (\cite[Theorem 4.7.3]{borokov}).
		\qed
	\end{proof}

	\begin{lemma}
		\label{lemma: pre P2 leq pre V} 
		For all $z \in \mathbb{R}^n$ and $l \in L$,
		$\mathrm{pre}\mathbb{E}(P)(z, l)^2 \leq C_{\Pi} \cdot \mathrm{pre}\mathbb{E}(V)(z, l) $
		where $C_{\Pi} \in \mathbb{R}$ depends only on the PTS $\Pi$.
	\end{lemma}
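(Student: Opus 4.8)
The plan is to reduce the global inequality to the per-transition bound already established in Lemma~\ref{lemma: diagonal terms}, exploiting the fact that $\Pi$ is \emph{non-demonic}. Recall that determinacy guarantees that for every location $l$ and every valuation $z \in \mathbb{R}^n$ there is a \emph{unique} transition $\tau$ with $\sloc{\tau} = l$ and $z \models \phi_\tau$. Hence in the defining sum
\[
\mathrm{pre}\mathbb{E}(P)(z,l) = \sum_{i=1}^{p} [(z \models \phi_i) \wedge (l = \sloc{\tau_i})]\cdot \mathrm{pre}\mathbb{E}(P,\tau_i)(z),
\]
exactly one Iverson bracket is nonzero, so the sum collapses to $\mathrm{pre}\mathbb{E}(P,\tau)(z)$ for that single enabled $\tau$; the identical collapse holds for $V$. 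Thus I would first fix $(z,l)$, let $\tau$ be the unique enabled transition, and record $\mathrm{pre}\mathbb{E}(P)(z,l) = \mathrm{pre}\mathbb{E}(P,\tau)(z)$ and $\mathrm{pre}\mathbb{E}(V)(z,l) = \mathrm{pre}\mathbb{E}(V,\tau)(z)$.

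With this reduction in hand, squaring the first identity and invoking Lemma~\ref{lemma: diagonal terms} immediately yields the transition-local estimate $\mathrm{pre}\mathbb{E}(P)(z,l)^2 = \mathrm{pre}\mathbb{E}(P,\tau)(z)^2 \leq j_\tau \cdot \mathrm{pre}\mathbb{E}(V,\tau)(z) = j_\tau \cdot \mathrm{pre}\mathbb{E}(V)(z,l)$. All the analytic work (the Cauchy--Schwarz/AM--GM step and the application of Jensen's inequality) is carried by that lemma, so nothing further is needed at this level.

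The only point requiring care, and the step I would flag as the main obstacle, is the passage from the transition-dependent factor $j_\tau$ to a single constant. I would set $C_\Pi := \max_{\tau \in T} j_\tau$, which is well defined and depends only on $\Pi$ because $T$ is finite. Replacing $j_\tau$ by the larger $C_\Pi$ preserves the inequality \emph{only} when the right-hand factor is non-negative, and this is where I would use the standing hypothesis $V \geq P^2 \geq 0$: since $\mathrm{pre}\mathbb{E}(V,\tau)(z) = \sum_{i} p_{\tau,i}\,\mathbb{E}_R\big(V(F_{\tau,i}(z,R),\dloc{\tau})\big)$ is a convex combination of expectations of a non-negative function, we have $\mathrm{pre}\mathbb{E}(V)(z,l) \geq 0$. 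Therefore $j_\tau \cdot \mathrm{pre}\mathbb{E}(V)(z,l) \leq C_\Pi \cdot \mathrm{pre}\mathbb{E}(V)(z,l)$, and chaining this with the transition-local estimate gives $\mathrm{pre}\mathbb{E}(P)(z,l)^2 \leq C_\Pi \cdot \mathrm{pre}\mathbb{E}(V)(z,l)$ for all $z$ and $l$, as required.
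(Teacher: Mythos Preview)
Your proposal is correct and follows essentially the same approach as the paper: exploit non-demonism to collapse the pre-expectation sum to the single enabled transition, apply Lemma~\ref{lemma: diagonal terms}, and take $C_\Pi = \max_{\tau \in T} j_\tau$. The paper squares the sum first and then uses non-demonism to kill the cross terms, whereas you collapse first and then square, but this is only a cosmetic difference; your explicit remark that $\mathrm{pre}\mathbb{E}(V)(z,l) \geq 0$ is needed to pass from $j_\tau$ to the maximum is a point the paper leaves implicit.
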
 
	
	\begin{proof}
		Let $z \in \mathbb{R}^n$ and $l \in L$.
		\[\begin{array}{ll}
		&\mathrm{pre}\mathbb{E}(P)(z, l)^2 \\
		= & 
		\left( \sum_{i = 1}^p [(z \models \phi_i) \wedge (l = \sloc{\tau})] \cdot \mathrm{pre}\mathbb{E}(P,\tau_i)(z) \right)^2
		\\ 
		= &
		\sum_{i, j = 1}^p [(z \models \phi_i) \wedge (l = \sloc{\tau})] [(z \models \phi_j) \wedge (l = \sloc{\tau})]  \mathrm{pre}\mathbb{E}(P,\tau_i)(z)  \mathrm{pre}\mathbb{E}(P,\tau_j)(z) 
		\\
		= &
		\sum_{i = 1}^p [(z \models \phi_i) \wedge (l = \sloc{\tau})] \cdot \mathrm{pre}\mathbb{E}(P,\tau_i)(z)^2
		\hspace*{20pt}
		\hbox{($\Pi$ is non-demonic)}
		\\ 
		\leq &
		\sum_{i = 1}^p [(z \models \phi_i) \wedge (l = \sloc{\tau})] \cdot j_{\tau_i} \cdot \mathrm{pre}\mathbb{E}(V,\tau_i)(z)
		\hspace*{20pt}
		\hbox{(Lemma \ref{lemma: diagonal terms})}
		\\ 
		\leq & 
		\max_{\tau \in T}\{j_{\tau}\} \cdot  \sum_{i = 1}^p [(z \models \phi_i) \wedge (l = \sloc{\tau})] \cdot \mathrm{pre}\mathbb{E}(V,\tau_i)(z)
		\\
		= & 
		\max_{\tau \in T}\{j_{\tau}\} \cdot \mathrm{pre}\mathbb{E}(V)(z) 
		\end{array}\]
		\qed
	\end{proof}

	\begin{lemma} \label{lemma: (P - pre P)^2 leq V} 
		For all $z \in \mathbb{R}^n$, and for all
		$l \in L$
		\[
		\big(P(z, l) - \mathrm{pre}\mathbb{E}(P)(z, l)\big)^2 \leq (2 + 2 \cdot C_{\Pi}) \cdot V(z, l)
		\]
		where $C_{\Pi} \in \mathbb{R}$ depends only on $\Pi$.
	\end{lemma}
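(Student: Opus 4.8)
The plan is to reduce the statement to two facts already available in this section — the standing assumption $V \geq P^2$ and Lemma~\ref{lemma: pre P2 leq pre V} — glued together by an elementary algebraic inequality and the persistence property. Fix $z \in \mathbb{R}^n$ and $l \in L$, and write $a := P(z,l)$ and $b := \mathrm{pre}\mathbb{E}(P)(z,l)$. The starting point is the pointwise inequality $(a - b)^2 \leq 2 a^2 + 2 b^2$, which holds because $2 a^2 + 2 b^2 - (a-b)^2 = (a+b)^2 \geq 0$. This splits the quantity to be bounded into a $P^2$-term and a $\mathrm{pre}\mathbb{E}(P)^2$-term, each of which can be controlled separately.

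First I would dispatch the $a^2 = P(z,l)^2$ term: by the standing hypothesis of this section we have $P(z,l)^2 \leq V(z,l)$, so $2 a^2 \leq 2\, V(z,l)$. Next I would dispatch the $b^2 = \mathrm{pre}\mathbb{E}(P)(z,l)^2$ term via Lemma~\ref{lemma: pre P2 leq pre V}, which yields $\mathrm{pre}\mathbb{E}(P)(z,l)^2 \leq C_\Pi \cdot \mathrm{pre}\mathbb{E}(V)(z,l)$, hence $2 b^2 \leq 2 C_\Pi \cdot \mathrm{pre}\mathbb{E}(V)(z,l)$.

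The remaining step — and the only genuinely non-mechanical one — is to convert the $\mathrm{pre}\mathbb{E}(V)$ appearing on the right-hand side back into $V$ itself. Note that some such conversion is unavoidable: without it the claim is simply false (take a deterministic update $z \mapsto 10 z$ with $P(z) = z$ and $V(z) = z^2$; then $(P - \mathrm{pre}\mathbb{E}(P))^2 = 81 z^2$, which cannot be dominated by a fixed multiple of $V$ depending only on $C_\Pi$). Here I invoke the persistence property (a2) assumed of $V$: there is $\alpha \in [0,1)$ with $\mathrm{pre}\mathbb{E}(V)(z,l) \leq \alpha\, V(z,l)$, and since $V \geq P^2 \geq 0$ and $\alpha \leq 1$ this gives $\mathrm{pre}\mathbb{E}(V)(z,l) \leq V(z,l)$. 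Combining the three bounds then yields
\[
\big(P(z,l) - \mathrm{pre}\mathbb{E}(P)(z,l)\big)^2 \leq 2\, V(z,l) + 2 C_\Pi \cdot V(z,l) = (2 + 2 C_\Pi)\, V(z,l),
\]
as required. Thus the crux is not any computation but the recognition that Lemma~\ref{lemma: pre P2 leq pre V} alone is insufficient, and that the geometric-persistence inequality $\mathrm{pre}\mathbb{E}(V) \leq \alpha\, V$ is exactly what closes the gap between $\mathrm{pre}\mathbb{E}(V)$ and $V$.
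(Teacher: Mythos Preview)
Your proof is correct and follows essentially the same route as the paper's: both reduce to the inequality $(a-b)^2 \leq 2a^2 + 2b^2$, then bound $P^2 \leq V$ by the standing assumption, $\mathrm{pre}\mathbb{E}(P)^2 \leq C_\Pi \cdot \mathrm{pre}\mathbb{E}(V)$ via Lemma~\ref{lemma: pre P2 leq pre V}, and finally $\mathrm{pre}\mathbb{E}(V) \leq \alpha\, V \leq V$ via (a2). Your derivation of the elementary inequality (via $(a+b)^2 \geq 0$) is slightly cleaner than the paper's expansion-and-AM-GM route, and your explicit counterexample showing that (a2) is genuinely needed is a welcome addition the paper omits.
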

	\begin{proof} 
		Take $z \in \mathbb{R}^n$ and $l \in L$. Then
		\[\begin{array}{lll}
		& (P(z, l) - \mathrm{pre}\mathbb{E}(P)(z, l))^2\\
		= & 
		P(z, l)^2 - 2 \cdot P(z, l) \cdot \mathrm{pre}\mathbb{E}(P)(z, l) + \mathrm{pre}\mathbb{E}(P)(z, l)^2 
		\\
		\leq & 
		P(z, l)^2 + 2  |P(z, l) \cdot \mathrm{pre}\mathbb{E}(P)(z, l)| + \mathrm{pre}\mathbb{E}(P)(z, l)^2
		\\
		\leq & 2 \cdot P(z, l)^2 +  2 \cdot \mathrm{pre}\mathbb{E}(P)(z, l)^2 
		\\ 
		\leq &
		2 \cdot V(z, l) + 2 \cdot C_{\Pi} \cdot \mathrm{pre}\mathbb{E}(V)(z, l)
		& \textrm{(Lemma \ref{lemma: pre P2 leq pre V})}
		\\ 
		\leq &
		2 \cdot V(z, l) + 2 \cdot C_{\Pi} \cdot \alpha \cdot V(z, l)
		& \textrm{(a2)}
		\\
		\leq &
		(2 + 2 \cdot C_{\Pi}) \cdot V(z, l)
		\end{array}
		\] 
		\qed	
	\end{proof}
	Now we show that $\sum_{i \in \mathbb{N}} V(X^i, L^i)$ converges.
	
	\begin{lemma} \label{lemma: geometric series}
		$\mathbb{E}(V(X^k, L^k)) \leq \alpha^k \cdot \mathbb{E}(V(X^0, L^0))$ for all $k \in \mathbb{N}$.
	\end{lemma}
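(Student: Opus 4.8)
The plan is to prove the bound by induction on $k$, using Lemma~\ref{lemma tilde}(i) to rewrite the one-step conditional expectation of $V$ as its pre-expectation, and then invoking the persistence condition (a2) to extract the contraction factor $\alpha$. I would regard $V$ as a function $h(x,l,k) := V(x,l)$ that simply ignores the step counter; since each $V(-,l)$ is a (SOS, hence) polynomial, the hypotheses of Lemma~\ref{lemma tilde} are met, and moreover $\mathbb{E}(V(X^k,L^k)) < \infty$ for every $k$ because all moments of the program variables exist by assumption, so all the expectations below are well defined (and non-negative, as $V$ is SOS).

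First I would establish the one-step inequality. By Lemma~\ref{lemma tilde}(i),
\[
\mathbb{E}[V(X^{k+1},L^{k+1}) \mid \calF_k] = \mathrm{pre}\mathbb{E}(V)(X^k,L^k) \qquad \text{a.s.}
\]
Condition (a2) asserts $\mathrm{pre}\mathbb{E}(V)(x,l) \leq \alpha\cdot V(x,l)$ for all $x \in \mathbb{R}^n$ and $l \in L$; substituting the random variables $X^k, L^k$ yields $\mathrm{pre}\mathbb{E}(V)(X^k,L^k) \leq \alpha\cdot V(X^k,L^k)$ almost surely. Combining the two gives the key a.s.\ bound $\mathbb{E}[V(X^{k+1},L^{k+1}) \mid \calF_k] \leq \alpha\cdot V(X^k,L^k)$.

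Next I would pass to unconditional expectations. By the tower property (law of total expectation) together with monotonicity of expectation applied to the previous bound,
\[
\mathbb{E}(V(X^{k+1},L^{k+1})) = \mathbb{E}\big(\mathbb{E}[V(X^{k+1},L^{k+1}) \mid \calF_k]\big) \leq \alpha\cdot \mathbb{E}(V(X^k,L^k)).
\]
This is the recurrence that drives the result. Finally, induction on $k$ closes the argument: the base case $k=0$ is the trivial equality $\mathbb{E}(V(X^0,L^0)) = \alpha^0\cdot\mathbb{E}(V(X^0,L^0))$, and the inductive step chains the recurrence with the hypothesis $\mathbb{E}(V(X^k,L^k)) \leq \alpha^k\cdot\mathbb{E}(V(X^0,L^0))$ to obtain $\mathbb{E}(V(X^{k+1},L^{k+1})) \leq \alpha^{k+1}\cdot\mathbb{E}(V(X^0,L^0))$.

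I expect no serious obstacle here; the argument is essentially a geometric-decay estimate obtained by iterating a supermartingale-style one-step inequality. The only points that merit a word of care are the integrability of $V(X^k,L^k)$ (guaranteed since $V$ is polynomial and all moments exist) and the transition from the \emph{pointwise} inequality in (a2) to an \emph{almost-sure} inequality once the random variables are substituted, both of which are routine.
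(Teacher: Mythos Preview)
Your proposal is correct and follows essentially the same route as the paper: induction on $k$, invoke Lemma~\ref{lemma tilde}(i) to identify the conditional expectation with $\mathrm{pre}\mathbb{E}(V)$, apply (a2) to get the contraction by $\alpha$, and use the tower property to pass to unconditional expectations. Your write-up is in fact slightly more careful than the paper's in flagging integrability and the pointwise-to-a.s.\ passage, but the argument is the same.
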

	\begin{proof} We prove this by induction on $k$.
		So
		$\mathbb{E}(V(X^0, L^0)) = \alpha^0 \cdot \mathbb{E}(V(X^0, L^0))$.
		By (a2) there is $\alpha \in [0,1)$ such that $\mathrm{pre}\mathbb{E}(V)(x, l) \leq \alpha \cdot V(x, l)$ for all $x \in \mathbb{R}^n$ and $l \in L$.
		By Lemma \ref{lemma tilde}, $\mathbb{E}[V(X^{k+1}, L^{k+1}) \mid \calF_k] = \mathrm{pre}\mathbb{E}(V)(X^k, L^k) \leq \alpha \cdot V(X^k, L^k)$, i.e.~$(V(X^k, L^k))_{k}$ is an $\alpha$-multiplicative supermartingale w.r.t.~${\calF_k}$.
		Using the Tower Property of conditional expectation, and the induction hypothesis, we have:
		\begin{align*}
		\mathbb{E}(V(X^{k+1}, L^{k+1})) 
		&= \mathbb{E}(\mathbb{E}[V(X^{k+1}, L^{k+1}) \mid \calF_k]) 
		\\
		&\leq \mathbb{E}(\alpha \cdot V(X^k, L^k)) 
		\leq \alpha^{k+1} \cdot \mathbb{E}(V(X^0, L^0)).
		\end{align*}
		\qed
	\end{proof}
	
	%
	
	\begin{lemma}\label{lem:exp-inf-sum-p-preEP}
		$\mathbb{E} \left( \sum_{i = 0}^{\infty} |P(X^i, L^i) - \mathrm{pre}\mathbb{E}(P)(X^i, L^i)| \right) < \infty$.
	\end{lemma}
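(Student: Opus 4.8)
The plan is to move the expectation inside the infinite sum and then bound each resulting summand by a term of a convergent geometric series. Since every summand $|P(X^i, L^i) - \mathrm{pre}\mathbb{E}(P)(X^i, L^i)|$ is non-negative, the Monotone Convergence Theorem (equivalently, Tonelli's theorem) justifies the interchange
\[
\mathbb{E}\left(\sum_{i=0}^{\infty} |P(X^i, L^i) - \mathrm{pre}\mathbb{E}(P)(X^i, L^i)|\right) = \sum_{i=0}^{\infty} \mathbb{E}\left(|P(X^i, L^i) - \mathrm{pre}\mathbb{E}(P)(X^i, L^i)|\right),
\]
so it suffices to show that the right-hand series converges.

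To estimate each term, I would first pass from the first absolute moment to the second moment via Jensen's inequality applied to the concave square root, i.e.~$\mathbb{E}(|Y|) \leq \sqrt{\mathbb{E}(Y^2)}$. Taking $Y := P(X^i, L^i) - \mathrm{pre}\mathbb{E}(P)(X^i, L^i)$ and invoking the pointwise bound of Lemma~\ref{lemma: (P - pre P)^2 leq V}, namely $(P(z,l) - \mathrm{pre}\mathbb{E}(P)(z,l))^2 \leq (2 + 2 C_{\Pi}) \cdot V(z,l)$, I obtain
\[
\mathbb{E}\left(|P(X^i, L^i) - \mathrm{pre}\mathbb{E}(P)(X^i, L^i)|\right) \leq \sqrt{(2 + 2 C_{\Pi}) \cdot \mathbb{E}(V(X^i, L^i))}.
\]
The geometric decay from Lemma~\ref{lemma: geometric series}, $\mathbb{E}(V(X^i, L^i)) \leq \alpha^i \cdot \mathbb{E}(V(X^0, L^0))$, then shows each summand is at most $\sqrt{(2 + 2 C_{\Pi}) \cdot \mathbb{E}(V(X^0, L^0))} \cdot (\sqrt{\alpha})^i$.

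Summing the geometric series with ratio $\sqrt{\alpha}$ finishes the argument: since $\alpha \in [0,1)$ we have $\sqrt{\alpha} \in [0,1)$, so $\sum_{i=0}^{\infty} (\sqrt{\alpha})^i = (1 - \sqrt{\alpha})^{-1} < \infty$, while $\mathbb{E}(V(X^0, L^0))$ is finite because all moments of the initial distribution exist by assumption. The only subtle point is the passage from the first to the second moment: the absolute value forces a square root, and the step to watch is that sending $\alpha^i$ through $\sqrt{\cdot}$ yields $(\sqrt{\alpha})^i$, whose base remains strictly below $1$, so geometric summability is preserved. Everything else is a routine chaining of the two supplied lemmas.
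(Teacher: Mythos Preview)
Your proof is correct and follows essentially the same route as the paper: swap expectation and sum, bound the second moment via Lemma~\ref{lemma: (P - pre P)^2 leq V}, apply the geometric decay of Lemma~\ref{lemma: geometric series}, and pass to the first moment by H\"older/Jensen. If anything you are more careful than the paper, which drops the square root after applying H\"older and sums a series with ratio $\alpha$ rather than $\sqrt{\alpha}$; your version with ratio $\sqrt{\alpha}$ is the accurate one.
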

	
	\begin{proof} 
		Let $i \in \mathbb{N}$.
		\begin{align*}
		&\mathbb{E}(|P(X^i, L^i) - \mathrm{pre}\mathbb{E}(P)(X^i, L^i)|^2) 
		\leq 
		\mathbb{E}((2 + 2  C_{\Pi}) \cdot V(X^i, L^i)) 
		\hspace*{18pt}
		\textrm{(Lem. \ref{lemma: (P - pre P)^2 leq V})}
		\\
		&= 
		(2 + 2  C_{\Pi}) \cdot \mathbb{E}(V(X^i, L^i))
		\leq 
		(2 + 2  C_{\Pi}) \cdot \alpha^i \cdot \mathbb{E}(V(X^0, L^0)) 
		< \infty
		\ \textrm{(Lem. \ref{lemma: geometric series})} 
		\end{align*}	
		By the Hölder inequality (\cite[Theorem 4.7.3]{borokov}), we can conclude 
		\[
		\mathbb{E}(|P(X^i, L^i) - \mathrm{pre}\mathbb{E}(P)(X^i, L^i)|) \leq (2 + 2 \cdot C_{\Pi}) \cdot \alpha^i \cdot \mathbb{E}(V(X^0, L^0)).
		\]	
		Thus
		\[
		\begin{array}{ll}
		&\mathbb{E} \left( \sum_{i = 0}^{\infty} |P(X^i, L^i) - \mathrm{pre}\mathbb{E}(P)(X^i, L^i)| \right)
		\\
		=&
		\sum_{i = 0}^{\infty}  \mathbb{E} \left(|P(X^i, L^i) - \mathrm{pre}\mathbb{E}(P)(X^i, L^i)| \right)
		\\ 
		\leq &
		\sum_{i = 0}^{\infty} (2 + 2 C_{\Pi})  \alpha^i  \mathbb{E}(V(X^0, L^0))
		\\
		= & 
		(2 + 2 C_{\Pi}) \cdot  \mathbb{E}(V(X^0, L^0)) \cdot \sum_{i = 0}^{\infty} \alpha^i
		\\
		= & 
		(2 + 2 C_{\Pi}) \cdot  \mathbb{E}(V(X^0, L^0)) \cdot \frac{1}{1-\alpha} 
		< \infty.
		\end{array}
		\]
		\qed
	\end{proof}
	
	\begin{lemma}
		\label{lemma: |P| < infty}
		$\mathbb{E} \left( \sum_{i = 0}^{\infty} |P(X^i, L^i)| \right) < \infty$.
	\end{lemma}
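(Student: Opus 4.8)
The plan is to bound each term $\mathbb{E}(|P(X^i, L^i)|)$ by a geometrically decaying quantity and then sum over $i$. First I would exploit the standing assumption $V \geq P^2$: for every $z \in \mathbb{R}^n$ and $l \in L$ we have $P(z,l)^2 \leq V(z,l)$, and hence $\mathbb{E}(P(X^i, L^i)^2) \leq \mathbb{E}(V(X^i, L^i))$ for each $i \in \mathbb{N}$.

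Next I would pass from the second moment to the first moment by Jensen's inequality (applied to the convex map $x \mapsto x^2$), exactly in the spirit of the proof of Lemma~\ref{lem:exp-inf-sum-p-preEP}:
\[
\mathbb{E}(|P(X^i, L^i)|) \leq \sqrt{\mathbb{E}(P(X^i, L^i)^2)} \leq \sqrt{\mathbb{E}(V(X^i, L^i))}.
\]
Applying Lemma~\ref{lemma: geometric series} gives $\mathbb{E}(V(X^i, L^i)) \leq \alpha^i \cdot \mathbb{E}(V(X^0, L^0))$, so that
\[
\mathbb{E}(|P(X^i, L^i)|) \leq \alpha^{i/2} \cdot \sqrt{\mathbb{E}(V(X^0, L^0))}.
\]

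Finally I would interchange the expectation with the infinite sum. Since all summands are non-negative, Tonelli's theorem (equivalently, monotone convergence) justifies
\[
\mathbb{E}\Big(\sum_{i=0}^{\infty} |P(X^i, L^i)|\Big) = \sum_{i=0}^{\infty} \mathbb{E}(|P(X^i, L^i)|) \leq \sqrt{\mathbb{E}(V(X^0, L^0))} \cdot \sum_{i=0}^{\infty} \alpha^{i/2}.
\]
As $\alpha \in [0,1)$ we have $\sqrt{\alpha} \in [0,1)$, so the geometric series $\sum_{i=0}^{\infty} (\sqrt{\alpha})^i = \tfrac{1}{1 - \sqrt{\alpha}}$ converges, while $\mathbb{E}(V(X^0, L^0)) < \infty$ because all moments of $X^0$ exist by assumption. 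This yields the claimed finiteness.

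There is essentially no serious obstacle here; the only subtlety is that the square-root bound produces the geometric ratio $\sqrt{\alpha}$ rather than $\alpha$, which is harmless since $\sqrt{\alpha} < 1$. An alternative route, should one prefer to reuse the previous lemma directly, is to write $|P| \leq |P - \mathrm{pre}\mathbb{E}(P)| + |\mathrm{pre}\mathbb{E}(P)|$, bound the first summand's series via Lemma~\ref{lem:exp-inf-sum-p-preEP}, and bound the second via Lemma~\ref{lemma: pre P2 leq pre V} together with the estimate $\mathrm{pre}\mathbb{E}(V)(X^i, L^i) \leq \alpha \cdot V(X^i, L^i)$; this likewise produces a convergent series of ratio $\sqrt{\alpha}$.
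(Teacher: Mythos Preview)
Your proposal is correct and follows essentially the same route as the paper: bound $\mathbb{E}(P(X^i,L^i)^2)\le\mathbb{E}(V(X^i,L^i))$, invoke Lemma~\ref{lemma: geometric series}, pass to the first moment via H\"older/Jensen, and sum the resulting geometric series (with Tonelli justifying the interchange). Your version is in fact more careful, correctly obtaining the ratio $\sqrt{\alpha}$; the paper's write-up drops the square root and records $\alpha^i$ rather than $\alpha^{i/2}$, which appears to be a slip but does not affect the conclusion.
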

	
	\begin{proof} 
		By Lemma \ref{lemma: geometric series}, for all $i \in \mathbb{N}$:
		\[
		\mathbb{E}(|P(X^i, L^i)|^2) = \mathbb{E}(P(X^i, L^i)^2) \leq \mathbb{E}(V(X^i, L^i)) \leq \alpha^i \cdot \mathbb{E}(V(X^0, L^0)).
		\] 
		By Hölder's Inequality (\cite[Theorem 4.7.3]{borokov}), $\mathbb{E}(|P(X^i, L^i)|) \leq \alpha^i \cdot \mathbb{E}(V(X^i, L^i))$ for all $i \in \mathbb{N}$.
		It follows from that
		\[
		\begin{array}{rll}
		\mathbb{E} \left( \sum_{i = 0}^{\infty} |P(X^i, L^i)| \right)
		&=& 
		\sum_{i = 0}^{\infty} \mathbb{E} \left( |P(X^i, L^i)| \right)
		\\
		&\leq&  
		\sum_{i = 0}^{\infty} \alpha^i \cdot \mathbb{E}(V(X^0, L^0))
		= 
		\dfrac{\mathbb{E}(V(X^0))}{1-\alpha} < \infty.
		\end{array}
		\]
		\qed
	\end{proof}


	\begin{proof}
		
		As $f$ is convex at $\mathbb{E}(X) \in \mathbb{R}$, there is a linear function $g: \mathbb{R} \rightarrow \mathbb{R}$ such that $f(\mathbb{E}(X)) = g(\mathbb{E}(X))$ and $f(y) \geq g(y)$ for all $y \in \mathbb{R}$.
		Thus by the linearity and the monotonicity of the expected value:
		\\
		$\Rightarrow f(\mathbb{E}(X)) = g(\mathbb{E}(X)) = \mathbb{E}(g(X)) \leq \mathbb{E}(f(X)) $
		\qed
	\end{proof}
	
	%
	%

	
	
	
	
	
	
	
\end{document}